\tikzstyle{process} = [rectangle, minimum width=3cm, minimum height=1cm, text centered, draw=black, fill=blue!10]
\tikzstyle{optional} = [rectangle, minimum width=3cm, minimum height=1cm, text centered, draw=black, fill=green!20, dashed]
\tikzstyle{arrow} = [thick,->,>=stealth]
\colorlet{shadecolor}{orange!15}
\newtheorem{definition}{Definition}
\newtheorem{claim}{Claim}
\newtheorem{theorem}{Theorem}
\newtheorem{corollary}{Corollary}
\newtheorem{assumption}{Assumption}
\newtheorem{proposition}{Proposition}
\newcommand{\R}{\mathbb{R}}
\newsavebox\myboxA
\newsavebox\myboxB
\newlength\mylenA
\definecolor{light-gray}{gray}{0.95}
\newcommand{\code}[1]{\colorbox{light-gray}{\texttt{#1}}}
\newcommand{\E}[1]{\mathbb{E}\left[#1\right]}
\newcommand{\Prob}[1]{\mathbb{P}\left(#1\right)}
\newcommand*\xoverline[2][0.75]{%
    \sbox{\myboxA}{$\m@th#2$}%
    \setbox\myboxB\null
    \ht\myboxB=\ht\myboxA%
    \dp\myboxB=\dp\myboxA%
    \wd\myboxB=#1\wd\myboxA
    \sbox\myboxB{$\m@th\overline{\copy\myboxB}$}
    \setlength\mylenA{\the\wd\myboxA}
    \addtolength\mylenA{-\the\wd\myboxB}%
    \ifdim\wd\myboxB<\wd\myboxA%
       \rlap{\hskip 0.5\mylenA\usebox\myboxB}{\usebox\myboxA}%
    \else
        \hskip -0.5\mylenA\rlap{\usebox\myboxA}{\hskip 0.5\mylenA\usebox\myboxB}%
    \fi}
\begin{document}
%
\title{Evaluating Policy Effects through Opinion Dynamics and Network Sampling}
%
%
%
\author{Eugene T.~Y. Ang$^{1}$,  Yong Sheng Soh$^{1,2}$
\thanks{$^{1}$Institute of Operations Research and Analytics, National University of Singapore,
        {\tt\small eugene.ang@u.nus.edu}}%
\thanks{$^{2}$Department of Mathematics, National University of Singapore, 
        {\tt\small matsys@nus.edu.sg}}%
}

\maketitle

\begin{abstract}
An essential aspect of effective policymaking is to regularly consider the population's response or feedback towards a newly introduced policy.  These can come in the form of population surveys or feedback channels, and they provide a simple way to understand the ground sentiment towards a new policy.  Conventional surveying methods implicitly assume that opinions are static; in reality, opinions are often dynamic -- the population will discuss and debate these newly introduced policies among themselves, 
and in the process form new opinions.  In this paper, we pose the following set of questions: Can we understand the dynamics of opinions towards a new policy within the population?
Specifically, can we quantify the evolution of opinions over the course of interaction?
How are these changes affected by the topological structure of the underlying network describing the relationship among the population?  
We investigate these questions using a model where the policymaker is able to select a subset of population to which a policy is initially revealed to.  By selecting the subset of respondents judiciously, the policymaker controls the degree of discussion that can take place among the population.  
Under this model, we quantify the changes in opinions between the empirically observed data post-discussion and its distribution pre-discussion, in terms of the number of selected respondents, as well as the number of connections each respondent has within the population network. 
We conduct a series of numerical experiments over synthetic data and real-world networks.  Our work aims to address the challenges associated with network topology and social interactions, and provide policymakers with a quantitative lens to assess policy effectiveness in the face of resource constraints and network complexities.
\end{abstract}

\begin{IEEEkeywords}
Graph Sampling, Network Dynamics, Policy Evaluation
\end{IEEEkeywords}

%
\IEEEpeerreviewmaketitle

\ifCLASSOPTIONcaptionsoff
  \newpage
\fi



%
\section{Introduction}
Rigorous policy design, analysis and evaluation are essential to good governance \cite{oecd2020}. Apart from evaluating whether a policy has produced the desired results or pinpointing the elements that facilitate or impede the policy's effectiveness, it is also imperative to discern how the population or target audience reacts and adapts to the policy. The level of public reception gives policymakers an indication of public confidence towards the policy \cite{grelle2024and}. Furthermore, it provides indications of possible shortcomings, drawbacks and resistance. By identifying areas where the policy may fall short, policymakers can proactively address these challenges and make necessary adjustments to improve its effectiveness and mitigate adverse impacts on the population. 

The gold standard for assessing the efficacy of policies is to deploy randomized controlled trials (RCTs) \cite{Hariton2018}. Notwithstanding the practical or ethical concerns \cite{goldstein2018ethical}, there are methodological challenges. It is often difficult to isolate the respondents in the population as well as the confounding factors that affect one's behavior and actions. For instance, respondents generally interact among themselves and influence each other in the process.  The result is that the collective may arrive at a consensus that is quite different before the interaction \cite{abrams2012formation,larsen1990asch,sherif1937experimental}, and such behavior may affect the overall policy effectiveness \cite{burstein2003impact}. As an example, the government of a country may wish to implement a vaccination campaign to control the spread of a contagious disease such as COVID-19. Interactions within their social circles can influence individuals' decisions to get vaccinated. To make the situation more complex, the presence of vaccine skeptics may (adversely) affect the vaccine take-up rate in the nation \cite{allen2024quantifying}. Hence, policymakers need to discern and evaluate the initial policy beliefs within the population before influential individuals shape public opinion prematurely.

To evaluate a policy effectively, not only do policymakers need an accurate assessment of the initial beliefs and responses towards the policy, but they also need some understanding of how these opinions might evolve and reform as respondents discuss among themselves.  More often than not, these interactions and fine-grained changes in beliefs are not captured in empirical data due to infrequent data collection processes, and policymakers need to resort to statistical inference to estimate such deviations. Furthermore, a binary response is sometimes not sufficient to capture one's response due to his or her ambivalence \cite{persson2021opinions}. Hence, it is important to design an appropriate policy evaluation study to accommodate a range of responses. Certain evaluation exercises, such as polls and referendums with categorical options, might not be able to capture the nuances in public opinion. One example is the 2016 Brexit referendum in the UK, in which voters had to choose to ``Leave" or ``Remain" within the European Union. Although the referendum produced an outcome on the basis of a simple majority, these blunt options can be difficult to interpret, and voters cannot express their opinions constructively \cite{van2021referendums,wagenaar2019}. 

Besides designing sophisticated studies, policymakers often face resource and feasibility constraints.  Certain information may not be available to these policymakers, such as individuals' covariates like income and age, and network connections, due to legal constraints. Examples of such provisions or laws include the Personal Data Protection Act (PDPA) in Singapore and the General Data Protection Regulation (GDPR) in the European Union and the European Economic Area. Thus, the policymakers might only have access to limited population data to ascertain the public opinion of the policy. Moreover, conducting periodic population-wide consensus surveys is time-consuming and resource-intensive. Realistically, policymakers are only able to survey a small group of people at a single time instance to infer the population's opinion. 

\subsection{Our Contributions}
The main objective of this work is to understand the impact of social dynamics and interaction on the population's opinion towards policies.  Concretely, we build a model that describes a population's response towards a new policy.  As time passes, the population is given the opportunity to discuss the newly announced policy, and may reform their personal beliefs and opinions.  Our central question is to quantify the size of these changes based on the characteristics of the social network underlying the population. 

The policymaker is ultimately aware that the population will discuss these policies among themselves, and may be interested in understanding the population's general response at different stages: one closer to the initial stages where opinions are raw, relatively uninformed, and possibly quite diverse, or in the later post-discussion stage where the policy has been extensively debated and a broader consensus may have been reached. To obtain the opinion polls at different stages, we assume that policymakers are able to select the subset of respondents they wish to reveal the policy to.  By doing so, policymakers have control over the amount of interaction between the respondents.  There are several ways policymakers can do so.  For example, the policymakers can select groups of respondents who do not know one another and are unlikely to interact among themselves, allowing policymakers to estimate initial policy responses free from the effects of interactions \cite{ang2025estimating,karwa2018,fatemi2023network}. However, it is impractical to attempt to recruit groups of respondents who do not know each other, as this assumes some knowledge about existing relationships among the population (e.g., through social media platforms). Policymakers can consider alternative strategies, such as randomly selecting a large group of respondents without regard for the underlying relationships, in the hope that the gains from estimating from a larger sample size are not severely impacted by the effects of discussion among the respondents.  In estimating the initial policy responses, policymakers face a tradeoff between the number of respondents they survey and the impact of discussion among respondents. 

\looseness=-1
In our work, we try to understand how the choice of which subset of respondents a policy is revealed to -- whether it is a small but carefully selected subset of the population who are unlikely to communicate among themselves, or a larger and more representative subset but will inadvertently reform opinions after discussion -- impacts population opinions. A different aspect that also affects opinion dynamics is the nature of the relationship among respondents, ranging from mutual interactions to unilateral influence exerted by highly influential individuals. To this end, we provide quantitative bounds on how these opinions change according to various sampling strategies, network topologies, and interaction contexts.
We describe the interplay between these factors that affect the estimation of the initial policy responses.
We also conduct numerical experiments using synthetic and real-world networks to illustrate the tradeoff between the sample size and belief deviation arising from interaction, on various sampling strategies, network topologies, and interaction contexts.  We illustrate specific instances where either factor dominates the other and describe how the number of relationships relative to the population size plays a critical role in the estimation accuracy. 

\subsection{Paper Layout}
\looseness=-1
In Section \ref{sec:related work}, we discuss past work on modeling opinion dynamics and policy evaluation methodologies used to estimate policy effects. In Section \ref{sec:problemformulation}, we describe the problem formulation and in Section \ref{sec:mainresults}, we provide the main results.  In Section \ref{sec:numericalexperiments}, we conduct numerical experiments across different synthetic and real-world networks, interaction contexts, and initial belief distributions, and we discuss insights drawn from these experiments.
\section{Related Work}\label{sec:related work}
In this section, we discuss existing work on opinion dynamics models as well as methodologies that use observational data to estimate policy effects in networked settings.

\subsection{Effect of Opinion Dynamics}\label{subsec:opinion}
The implementation of a new policy prompts discussions among people, and these interactions influence their beliefs and behaviors regarding the policy. As public opinion plays an important role in determining the effectiveness of the policy \cite{jones2009trans}, it is important to understand and explore how consensus and polarization develop over time.  One can adopt a data-driven approach by analyzing large-scale social data on communication platforms \cite{gonzalez2013social}. However, these data may not be entirely available due to legal or confidentiality issues. Hence, limited access to data as well as complexities arising from the data collection process call for realistic modeling of opinion formation and interaction \cite{peralta2022opinion,sobkowicz2009modelling,castellano2009statistical}. There are several notable opinion dynamics models, such as the voter model \cite{holley1975ergodic,redner2019reality}, DeGroot model \cite{degroot1974reaching} and bounded confidence model \cite{lorenz2007continuous,hegselmann2002opinion,deffuant2000mixing}. These models assign an opinion state variable to every individual. These variables change over time based on the system's stipulated mechanism. The resulting systemic state usually depends on the underlying graphical structure \cite{castellano2009statistical,golub2010naive,xia2011opinion}.  These models help to explain the formation of collective opinions, the shift in beliefs due to peer and social influence and the emergence of social consensus or division. To better reflect real-world opinion dynamics, researchers have developed various methods for estimating the strength of social influence within networks. Recent empirical studies estimate influence weights through controlled experiments that track how individuals adjust beliefs in response to peer confidence \cite{moussaid2013social} or by calibrating opinion dynamic models to observational data \cite{gestefeld2023calibrating}. Machine learning approaches, such as generative adversarial networks (GANs) \cite{wang2025modeling}, graph neural networks (GNNs) \cite{li2025unigo}, and inverse reinforcement learning \cite{yuan2025behavioral}, have also been used to estimate influence patterns from complex social data, offering an alternative to traditional model-based inference.  In our work, we incorporate these mechanisms to understand how surveyed respondents update their beliefs through interactions. By modeling the interactions and opinion dynamics, policymakers can gauge the initial distribution of population beliefs and quantify the evolution of the policy beliefs.

\subsection{Challenges in Policy Evaluation Methods}
It is challenging to assess a policy's effectiveness, as individuals' behaviors and beliefs are largely influenced by their social connections. This influence potentially confounds the evaluation of policy effects and complicates the assessment of its overall effectiveness. Addressing the role of network influence in policy evaluation requires stakeholders to adopt rigorous experimental methods executed with nuanced sophistication \cite{athey2017}.  Some studies use randomization inference to estimate the treatment effect in the presence of network interference \cite{bowers2013reasoning,aronow2012general}, while others construct statistical estimators to estimate the policy effects \cite{leung2020treatment,tchetgen2012causal,forastiere2021identification}. There is also a stream of work that uses the linear-in-means model \cite{aronow2017estimating,cai2015}, which estimates the average treatment effect based on the aggregated individuals' covariates \cite{manski1993,kline2014}.

\looseness=-1
We note that most policy studies make use of fine-grained information, such as individuals' beliefs and demographic covariates, to aid in the estimation of the policy effects. However, due to certain constraints, policymakers do not necessarily have complete access to such information. The incomplete data, on top of the possible interaction between the selected policy-exposed respondents, inhibits the performance of these models and statistical estimators, thus creating a discrepancy in the estimates. Policymakers can account for this discrepancy by computing the distance between the initial and empirically observed belief distributions. Popular distance metrics include Wasserstein and total variation or divergences such as Kullback-Leibler (KL) and Jensen-Shannon (JS) \cite{cai2022distances}.  While metrics such as KL and JS divergences are often used in comparing probability distributions due to their ease of computation, they present several limitations. KL divergence is sensitive to small fluctuations in data samples and is agnostic to the geometry of the underlying distribution \cite{ozair2019wasserstein}. Moreover, both KL and JS divergences can become ill-defined when the distributions have non-overlapping support. In particular, the JS divergence fails to provide meaningful analysis in such settings \cite{kolouri2018sliced, arjovsky2017wasserstein}. These shortcomings make such metrics less amenable to analysis and statistical inference. In our work, we measure differences in distributions using the Wasserstein distance.  The Wasserstein distance is a special case of the optimal transport problem, which has the interpretation of being the minimum effort needed to shift mass from one distribution to another \cite{villani2009optimal}.  In particular, the Wasserstein distance provides a meaningful measure of difference between distributions even in settings where their support do not overlap.  Notably, it has various applications in machine learning \cite{frogner2015learning,leo2023wasserstein}, economics \cite{gini1914di} and finance \cite{rachev1998mass}. By quantifying the distance between the initial and empirically observed belief distributions using the Wasserstein metric, we highlight the effects of interaction and sample size and illustrate their tradeoff analytically and numerically.
\section{Problem Formulation}\label{sec:problemformulation}
Suppose a policymaker wishes to implement a specific policy on the population denoted by $P$.  We model the internal (private) response of each respondent in $P$ as a scalar random variable $X_i$ drawn independently from an {\em unknown} distribution $\mathcal{F}^\star$
$$X_i \sim \mathcal{F}^\star, \quad X_i \in \mathbb{R}.$$

We let $\mu$ and $\sigma$ denote the mean and standard deviation of $\mathcal{F}^\star$ respectively.\footnote{As a note, several opinion dynamics models such as the Deffuant-Weisbuch (DW) and the Hegselmann-Krause (HK) models have a bounded support interval of $[0,1]$ -- here, we do not make such assumptions.} Policymakers are ultimately interested in understanding the population's receptiveness towards a policy. We model this process as trying to estimate the unknown distribution $\mathcal{F}^\star$.  The policymaker does so by drawing some observations from $\mathcal{F}^\star$, for instance, by performing a survey.  Unfortunately, this process is complicated by the fact that individuals within the population communicate among themselves and, in the process, influence each other's opinions of the policy.  While it is in principle possible for the policymaker to survey every respondent and track their responses periodically so as to understand how their opinions form and alter, this approach is simply too expensive to be practical.  Instead -- and realistically speaking -- policymakers are only able to conduct a survey of the population's opinion in a single time instance.  Moreover, surveys are usually difficult to design and expensive to set up. By the time a survey is conducted, a substantial amount of time would have lapsed for the population to interact and reform opinions. 

We model relationships within the population as a graph $G(P,E)$, where the set of vertices is formed by the population $P$, and where the edges $E$ represent the relationship between pairs of respondents.  We distinguish between two scenarios: (i) where the relationships are mutual, in which case we model these relationships as an undirected graph $G$, and (ii) where relationships may be directed (e.g., a regular respondent who views social media posts by an influential person), in which case we model these relationships as a directed graph $G^d$. In the undirected case, we let $d_i$ denote the degree of respondent $i$ in $G$, i.e., it is the number of connections $i$ has.  Similarly, in the directed case, we let $d_i^o$ denote the outdegree of $i$, i.e., it is the number of respondents that $i$ knows. 

In this paper, we assume that policymakers have the ability to select the subset $S \subseteq P$ of the population $P$ to which the policy is revealed to.  There is an advantage in restricting access.  For instance, if we are able to reveal the policy only to a subset of the population who do not know each other, we minimize the possibility that these people interact and affect each other's opinions.  For simplicity, we also take $S$ to be the subset of the population that the policymakers eventually survey from in our subsequent discussions.  We let $\hat{\mathcal{F}}_{|S|}$ denote the empirical distribution of the response formed by $S$. 

Next, we model how people influence each other's opinions.  Suppose that policymakers reveal the policy to $n$ respondents, where $n < |P|$.  The interaction is represented by a stochastic matrix $A \in \mathbb{R}^{n \times n}$, where the entries $a_{ij}$ are non-negative and satisfy $\sum_j a_{ij} = 1$ for all $i \in [n]$.  The entries $a_{ij}$ model the weight respondent $i$ places on the opinion of respondent $j$ \cite{degroot1974reaching}.  After interaction, respondent $i$ updates its own opinion with $\sum_j
a_{ij} X_j$. In particular, the only entries $a_{ij}$ that can be non-zero are those where vertex $j$ is adjacent to $i$, i.e., $j \sim i$, or where $j = i$.
In the first part of this paper, we adopt the model where each respondent listens to each of his/her neighbor, including his/her own, equally, that is, 
$$
a_{ij} = \begin{cases}
   \frac{1}{|N_i|+1} \text{ for } j \in \{i\} \cup N_i, \\
   0 \text{ otherwise}.
\end{cases}
$$
Here, $N_i$ is the set of neighbors of respondent $i$.  We call this process the \emph{average interaction rule}. 
This rule is a simple mechanism to capture the effect of interactions.  There are settings where individuals may, for instance, place more emphasis on their personal opinions or may listen to certain people more than others.  We discuss these extensions where the opinions are formed by weighing the opinions of neighbors differently, i.e., the weighted interaction rule, in Section \ref{sec:extendedanalysis}. 

We summarize the sequence of events in Figure \ref{fig:Flowchart}.  First, policymakers select the subset of respondents $S$ to which the policy is revealed to.  The policy is revealed to $S$, and these respondents form their initial beliefs.  Next, they interact among themselves.  After interaction, they update their beliefs simultaneously.  In the last stage, policymakers observe their updated beliefs.


\begin{figure}[H]
\centering
\begin{tikzpicture}[node distance=1.3cm]

\node (sample) [process] {Sample respondents};
\node (expose) [process, below of=sample] {Expose policy};
\node (interact) [process, below of=expose] {Interact among respondents};
\node (update) [process, below of=interact] {Update beliefs};
\node (estimate) [process, below of=update] {Measure updated belief};

\draw [arrow] (sample) -- (expose);
\draw [arrow] (expose) -- (interact);
\draw [arrow] (interact) -- (update);
\draw [arrow] (update) -- (estimate);

\end{tikzpicture}
\caption{Flow chart of events}
\label{fig:Flowchart}
\end{figure}

\subsection{Sampling Strategies}
In this section, we describe how a subset of respondents is selected.  We describe different sampling strategies and discuss their implications in the estimation process. 

\textbf{Independent Set Sampling} \cite{ang2025estimating,karwa2018,fatemi2023network}. In this setting, the policy is revealed to a subset of $n$ respondents who are carefully chosen such that these respondents do not know one another. The group of respondents form an independent set in the relationship graph -- every pair of respondents does not communicate with one another. In particular, their beliefs remain unchanged between the policy revelation and the subsequent conduct of the survey.  Policymakers are then able to estimate the initial belief distribution provided they have surveyed sufficiently many respondents.

Suppose the samples being drawn i.i.d. from the same underlying $\mathcal{F}^\star$, then the Strong Law of Large Numbers (SLLN) tells us that
$$\hat{F}_n(t) \xrightarrow{a.s.} F(t) \text{ as } n \rightarrow \infty,$$
\looseness=-1
for every value of $t$, where $\hat{F}_n$ is the empirical cumulative distribution function (ecdf) of $\hat{\mathcal{F}}_n$ and $F$ is the cumulative distribution function (cdf) of $\mathcal{F}^\star$.  In other words, one is able to accurately estimate -- and without any {\em bias} -- the population's sentiment provided the sample size is sufficiently large.

\looseness=-1
In general, and as in all estimation problems, having as many samples as our budget allows is preferred.  There are two drawbacks to the independent set sampling strategy.  First, and most crucially, it assumes that policymakers have knowledge about the graph $G$, that is, they know the relationships among its population.  This is an extremely strong assumption and can be unrealistic in certain policy contexts. The second downside is that requiring the set of respondents to form an independent set (in the relationship graph $G$) places a hard limit on the number of samples one can draw, and this depends on the graphical structure.  If the average person knows a substantial fraction of the population, i.e., the average degree of the graph $G$ is high, then we are only able to select a small number of respondents while ensuring that none of them know each other.  For instance, the following result by Kwok \cite{west1996introduction} describes an upper bound on the size of the independent set based on the maximum degree of the network.
\begin{theorem}\label{thm: upperboundindpset}
Let $G(V,E)$ be a graph and let $\Delta$ be the maximum degree in $G$. Then, the size of the largest independent set of $G$, $\alpha(G)$, has the following upper bound,
$$ \textstyle \alpha(G) \leq |V| - \frac{|E|}{\Delta}.$$
\end{theorem}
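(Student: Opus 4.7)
The plan is to exploit the standard duality between independent sets and vertex covers. Let $S \subseteq V$ be a maximum independent set, so $|S| = \alpha(G)$, and let $T = V \setminus S$ be its complement, with $|T| = |V| - \alpha(G)$. The first observation is that $T$ must be a vertex cover of $G$: indeed, if some edge $\{u,v\} \in E$ had both endpoints outside $T$, i.e., both in $S$, this would contradict $S$ being an independent set. Hence every edge of $G$ has at least one endpoint in $T$.

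Next I would count edges by their endpoints in $T$. Since each edge is incident to at least one vertex of $T$, we have the inequality
$$|E| \leq \sum_{v \in T} d_v.$$
Because $d_v \leq \Delta$ for every vertex, this sum is at most $\Delta \cdot |T| = \Delta \cdot (|V| - \alpha(G))$. Rearranging yields
$$\alpha(G) \leq |V| - \frac{|E|}{\Delta},$$
which is exactly the claimed bound.

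There is no serious obstacle here — the result is really a one-line consequence of the independent-set / vertex-cover duality together with the degree bound. The only subtlety worth flagging in the write-up is the double-counting issue: the inequality $|E| \leq \sum_{v \in T} d_v$ is not tight in general because edges with both endpoints in $T$ get counted twice, but since we only need an upper bound on $|E|$ this is harmless. If one wanted a cleaner derivation, one could alternatively start from the sum of degrees $\sum_{v \in V} d_v = 2|E|$ and split it as contributions from $S$ and from $T$, but the vertex-cover argument above is the most direct route to the stated inequality.
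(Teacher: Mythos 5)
Your argument is correct and complete: the complement of a maximum independent set is a vertex cover, each of its $|V|-\alpha(G)$ vertices covers at most $\Delta$ edges, and every edge must be covered, which gives $|E| \leq \Delta(|V|-\alpha(G))$ and hence the bound. The paper itself states this result with a citation and no proof, and your vertex-cover counting is exactly the standard derivation of this bound (your remark that double-counting edges inside the cover only helps the inequality is the right thing to flag).
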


\textbf{Cluster/Clique Sampling} \cite{ugander2013,ugander2020,eckles2017design}. 
In the previous set-up, selected respondents do not know each other, and hence there was no element of interaction.  Our next sampling strategy introduces a stylized model where we control for the degree of interaction.  
Concretely, we assume that the policymakers sample $p$ disjoint cliques $C_1, C_2, \ldots, C_p$,  each of size $r$.  As such, the total number of selected respondents is $n = pr$.  We let $\Tilde{G}$ denote the resulting subgraph describing the relationship between these sampled respondents
\[
\begin{cases}
e_{ij} \in \Tilde{E} & \text{ if } i,j \in C_k \text{ for } k \in \{1, 2, \ldots, p\},\\
e_{ij} \not\in \Tilde{E} & \text{ if } i \in C_k, \, j \in C_l \text{ for } k \neq l,
\end{cases}
\]
with $\Tilde{E}$ denoting the set of edges of $\Tilde{G}$. Respondents within the same clique interact with each other and influence each other's opinions.  Respondents from different cliques, on the other hand, do not interact. The purpose of introducing the cluster/clique sampling strategy is to help us understand how certain attributes about communities (such as the size of the communities $r$ as well as the number of different communities $p$) affect our estimation of $\mathcal{F}^\star$. 

\textbf{Random Sampling}. 
In the independent set sampling strategy, we assume perfect knowledge of the underlying relationship network $G$ and we sample respondents based on knowledge of $G$.  The random sampling strategy does the exact opposite -- we simply sample from a large pool without regard for the underlying relationship between respondents.  In doing so, we are potentially able to sample from a larger pool of respondents.  The downside is that there may be a fraction of respondents who know each other, that is, they share edges in $G$.  The policymakers have no control over this process.  These respondents may communicate about the policies among themselves before we are able to survey their response, and in the process, update their internal response to the policies. 

After illustrating the sampling strategies that the policymakers use, we explain how policymakers can account for the discrepancies between the observed belief distribution $\hat{\mathcal{F}}_n$ and the initial belief distribution $\mathcal{F}^\star$ in the next section.

\section{Quantifying the Distance between $\hat{\mathcal{F}}_n$ and $\mathcal{F}^\star$}\label{sec:mainresults}
\looseness=-1
The main source of information policymakers have to infer the response of the entire population is the surveyed responses from the sampled population. 
Based on these responses, the policymakers form an estimate of the underlying distribution of the population's private (unknown) response, which in turn forms the basis of their evaluation of the policies.  As such, it is of interest to understand how the estimated distribution differs from the true underlying distribution.

Specifically, policymakers collect surveyed responses $\{ X_i\}_{i=1}^{n}$.  Using these responses, the policymaker obtains an empirical distribution $\hat{\mathcal{F}}_n$ using the empirical data, which then serves as a comparison to the underlying unknown distribution $\mathcal{F}^\star$.  Both $\hat{\mathcal{F}}_n$ and $\mathcal{F}^\star$ are probability distributions, so a natural way to compare these is to deploy a suitable distance measure over probability distributions. 

\subsection{Optimal Transport Distance}

In this paper, we quantify the difference between probability distributions using the Wasserstein distance. The Wasserstein distance is a special instance of the optimal transport (OT) problem, which seeks optimal transportation plans between probability distributions so as to minimize cost.  Concretely, let $\alpha \in \Delta_m$ and $\beta \in \Delta_n$ be probability distributions over a metric space, here, $\Delta_m = \{\alpha \in \R^m_+, \sum^m_{i=1}\alpha_i = 1\}$ denotes the probability simplex. Let $C \in \R^{m \times n}$ be the matrix such that $C_{i,j}$ models the transportation cost between points $x_i \sim \alpha$ and $y_j \sim \beta$. The Wasserstein distance is defined as the solution of the following convex optimization instance
\begin{equation}\label{eqn: wasserstein}
\pi_W = \underset{\pi}{\arg\min}~ \langle C, \pi \rangle \qquad \mathrm{s.t.} \qquad \pi \in \Pi(\alpha, \beta)
\end{equation}
where $\Pi(\alpha, \beta) = \{\pi \in \R^{m \times n}_+: \pi\mathbf{1}_n = \alpha, \pi^T\mathbf{1}_m = \beta\}$ denotes the set of couplings between probability distributions $\alpha \in \Sigma_m, \beta \in \Sigma_n$ and $\mathbf{1}_m \in \R^m$ denotes the vector of ones. 

The OT problem \eqref{eqn: wasserstein} is an instance of a linear program (LP), and hence admits a global minimizer. 
In the special case where the points $x_i$ and $y_j$ lie on the real line $\R$, and where the cost between two points $x$ and $y$ is given by the absolute value $|x-y|^p$, the $W_p$ distance admits a simpler expression:
$$ \textstyle W_p(\xi,\nu) = \left(\int^1_0|G_1^{-1}(q)-G_2^{-1}(q)|^pdq\right)^{1/p},$$
where $\xi$ and $\nu$ are probability measures on $\R$, and $G_1^{-1}$ and $G_2^{-1}$ are their respective inverse cdf.
In particular, when $p = 1$, the $W_1$ distance simplifies to the following:
$$ \textstyle W_1(\xi,\nu) = \int_{\R}|G_1(x)-G_2(x)|dx,$$
where $\xi$ and $\nu$ are probability measures on $\R$, and $G_1(x)$ and $G_2(x)$ are their respective cdfs.



Suppose that the policymakers sample $n$ respondents. Given $n$ i.i.d. samples $X_1, X_2, \ldots, X_n$ from the distribution $\mathcal{F}^\star$, the ecdf of their beliefs $\hat{F}_n$, is given as 
$$\textstyle \hat{F}_n(t) = \frac{1}{n}\sum^n_{i=1}\mathds{1}(X_i \leq t).$$
The ecdf is a random measure.  As such, to quantify how $\hat{F}_n$ differs from $F$, we compute the expected $W_1$ distance between $\hat{F}_n$ and $F$.  Next, we investigate how the various sampling strategies change the ecdf and its implications on the $W_1$ distance.

\subsection{Independent Set Sampling}
As the independent set sampling strategy does not allow interactions between the sampled respondents, the observed belief distribution $\hat{\mathcal{F}}_n$ is just the empirical version of the initial belief distribution $\mathcal{F}^\star$. Then, the ecdf $\hat{F}_n(t)$ is given as $\frac{1}{n}\sum^n_{i=1}\mathds{1}(X_i \leq t)$. We note that, for a fixed $t$, $\mathds{1}(X_i \leq t)$ is a Bernoulli random variable, where 
\[
\mathds{1}(X_i \leq t) =
\begin{cases}
1 & \text{w.p. $F(t)$},\\
0 & \text{w.p. $1-F(t)$}. 
\end{cases}
\]
Then, $\E{\hat{F}_n(t)} = F(t)$ and $Var(\hat{F}_n(t)) = \frac{F(t)(1-F(t))}{n}$.


\begin{proposition}\label{prop: indpsample}
The expected $W_1$ distance between the ecdf $\hat{F}_n$ and the cdf $F$ has the following upper bound,
$$\textstyle \E{W_1(\hat{F}_n,F)} \leq \frac{1}{\sqrt{n}}\int_{\R}\sqrt{F(t)(1-F(t))}dt.$$
\end{proposition}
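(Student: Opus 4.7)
The plan is to start from the closed-form expression for $W_1$ on the real line given earlier in the paper, namely $W_1(\hat{F}_n, F) = \int_{\R} |\hat{F}_n(t) - F(t)|\, dt$, and then exchange expectation with integration before bounding the pointwise absolute deviation by the pointwise standard deviation. Since the expression under independent set sampling is a simple empirical cdf of i.i.d.\ samples, the variance calculation preceding the proposition gives us exactly what we need.

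First, I would write
\[
\E{W_1(\hat{F}_n, F)} = \E{\int_{\R} |\hat{F}_n(t) - F(t)|\, dt} = \int_{\R} \E{|\hat{F}_n(t) - F(t)|}\, dt,
\]
where the interchange is justified by Tonelli's theorem since the integrand is non-negative and measurable. Second, for each fixed $t$, I would apply Jensen's inequality to the concave function $\sqrt{\cdot}$ to obtain
\[
\E{|\hat{F}_n(t) - F(t)|} \leq \sqrt{\E{(\hat{F}_n(t) - F(t))^2}} = \sqrt{\mathrm{Var}(\hat{F}_n(t))},
\]
where the equality uses $\E{\hat{F}_n(t)} = F(t)$. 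Third, I would substitute the variance expression $\mathrm{Var}(\hat{F}_n(t)) = F(t)(1-F(t))/n$ derived just before the proposition statement, which follows from $\hat{F}_n(t)$ being a normalized sum of $n$ i.i.d.\ Bernoulli$(F(t))$ indicators. Pulling the $1/\sqrt{n}$ factor out of the integral then delivers the stated bound.

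There is no serious obstacle here; the proof is essentially two invocations (Tonelli and Jensen) stitched together with the pre-computed mean and variance of the Bernoulli indicator. The only minor point worth noting is measurability and integrability issues for the Tonelli step: the integrand $|\hat{F}_n(t) - F(t)|$ is jointly measurable in $(\omega, t)$ and the right-hand side is finite provided $\int_{\R} \sqrt{F(t)(1-F(t))}\, dt < \infty$, which is implicitly assumed (otherwise the bound is vacuous). If one wishes to be careful, one can verify that under the assumption of a finite first moment for $\mathcal{F}^\star$, the bound remains meaningful since $\sqrt{F(t)(1-F(t))}$ decays like $\sqrt{F(t)}$ for $t \to -\infty$ and $\sqrt{1-F(t)}$ for $t \to \infty$, both of which are integrable under a moment condition slightly stronger than finite mean.
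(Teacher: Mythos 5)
Your proposal is correct and follows essentially the same route as the paper's proof: exchange expectation and integral (the paper cites Fubini where you cite Tonelli, which is the sharper justification for a non-negative integrand), apply Jensen's inequality to the concave square root, and substitute the Bernoulli variance $F(t)(1-F(t))/n$. Your closing remark on integrability of $\sqrt{F(t)(1-F(t))}$ is a sensible addition the paper omits, but it does not change the argument.
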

\begin{proof}
\begin{align*}
\textstyle \E{W_1(\hat{F}_n,F)} &= \textstyle \E{\int_{\R}|\hat{F}_n(t)-F(t)|dt} \\
& \textstyle \overset{(a)}{=} \int_{\R}\E{|\hat{F}_n(t)-F(t)|}dt \\
& \textstyle = \int_{\R}\E{\sqrt{(\hat{F}_n(t)-F(t))^2}}dt \\
& \textstyle \overset{(b)}{\leq}\int_{\R}\sqrt{\E{(\hat{F}_n(t)-F(t))^2}}dt  \\
& \textstyle = \frac{1}{\sqrt{n}}\int_{\R}\sqrt{F(t)(1-F(t))}dt 
\end{align*}
Here, (a) is by Fubini's Theorem and (b) is by Jensen's inequality on a concave square-root function.
\end{proof}

\looseness-1
In particular, Proposition \ref{prop: indpsample} tells us that the deviation between the initial and the empirically observed belief distributions vanishes at a rate of $O(1/\sqrt{n})$, where $n$ is the number of sampled respondents.  However, as we noted earlier, there is a hard limit to the value of $n$ to which we can apply the result.

\subsection{Clique Sampling}
Next, we consider the clique sampling strategy where policymakers obtain a sample of $n$ respondents by selecting $p$ cliques of size $r$ randomly. In this model, all respondents within the same clique know each other and interact among themselves.  On the other hand, pairs of respondents belonging to different cliques do not interact with each other.  Following the average interaction rule, the updated beliefs of each respondent after interacting are a random variable whose distribution is equal to the sample mean distribution of $r$ i.i.d. samples drawn from the distribution $\mathcal{F}^\star$.  In the following, we denote the sample mean distribution obtained by drawing $r$ i.i.d. samples from $\mathcal{F}^\star$ by $\mathcal{F}^r$, and we denote the cdf of $\mathcal{F}^r$ by $F^r$. Since all respondents within the same clique perform the same update, all of them share the same updated belief after interacting. Subsequently, the empirical distribution formed by the surveyed responses is equal to $p$ i.i.d. random variables drawn from $\mathcal{F}^r$, and each observation is repeated with $r$ copies.  We denote the resulting ecdf by $\hat{F}^r_p$.  Suppose we let $\mu$ and $\sigma$ denote the mean and the standard deviation of $\mathcal{F}^\star$.  Then, the mean and standard deviation of $\mathcal{F}^r$ is $\mu$ and $\sigma/\sqrt{r}$ respectively. Since the $W_1$ distance defines a metric, one has
$$
\mathbb{E}[W_1 (\hat{F}_p^r,F)] \leq \mathbb{E}[W_1 (\hat{F}_p^r,F^r)] + \mathbb{E}[W_1 (F^r,F)].
$$
In the following, we bound the term $\mathbb{E}[W_1 (\hat{F}_p^r,F)]$ by analyzing the two error terms on the RHS.  In particular, Proposition \ref{prop: indpsample} allows us to bound $\mathbb{E}[W_1 (\hat{F}_p^r,F^r)]$ directly.  As such, the next step is to understand how the sample mean distribution deviates from the original distribution, as a function of $r$.  We are not aware if a simple expression that bounds $\mathbb{E}[W_1 (F^r,F)]$ for general distributions exists.  However, explicit expressions for the 2-Wasserstein ($W_2$) distances between normal distributions are well known. As such, in the following, we establish a bound for $\mathbb{E}[W_1 (F^r,F)]$ by approximating $F^r$ and $F$ with a normal distribution with matching mean and variance.

First, we state a result that provides the $W_2$ distance between two distributions in terms of its cdfs.

\begin{theorem}[Equation 21 \cite{irpino2015basic}]\label{thm:2wasserstein}
Let $A$ and $B$ be two distributions.  Let $F_{A}$ and $F_{b}$ be the corresponding cdfs.  The $W_2$ distance between these distributions is given by 
\begin{equation*}
\resizebox{\hsize}{!}{$W^2_2(F_A, F_B) = (\mu_A-\mu_B)^2 + (\sigma_A-\sigma_B)^2 + 2\sigma_A\sigma_B(1-\rho^{A,B}),$}
\end{equation*}
\looseness-1
where $\mu_A$ and $\mu_B$ are the respective means, $\sigma_A$ and $\sigma_B$ are the respective standard deviations, and $\rho^{A,B}$ is the Pearson correlation of the points in the quantile-quantile plot of $F_A$ and $F_B$.
\end{theorem}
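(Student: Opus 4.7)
The plan is to reduce the claim to a direct algebraic identity via the closed-form expression for the $W_2$ distance in one dimension. Specifically, I would start from the formula already stated in the excerpt (specialized to $p=2$), namely
$$W_2^2(F_A, F_B) = \int_0^1 \bigl(F_A^{-1}(q) - F_B^{-1}(q)\bigr)^2\, dq.$$
Introducing a random variable $Q \sim \mathrm{Uniform}[0,1]$ and defining $X := F_A^{-1}(Q)$ and $Y := F_B^{-1}(Q)$, the integral equals $\mathbb{E}[(X-Y)^2]$, with the marginal laws of $X$ and $Y$ being $F_A$ and $F_B$ respectively. This reformulates the $W_2^2$ quantity as a single expectation over a coupling of the two distributions, and is the natural entry point for relating it to means, variances, and covariances.

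From here the argument is pure second-moment bookkeeping. Expand
$$\mathbb{E}[(X-Y)^2] = \mathbb{E}[X^2] + \mathbb{E}[Y^2] - 2\,\mathbb{E}[XY],$$
use $\mathbb{E}[X^2] = \mu_A^2 + \sigma_A^2$ and the analogue for $Y$, and write $\mathbb{E}[XY] = \mu_A\mu_B + \mathrm{Cov}(X,Y)$. Collecting the mean terms produces $(\mu_A - \mu_B)^2$, while the remaining terms can be rearranged via the identity $\sigma_A^2 + \sigma_B^2 - 2\,\mathrm{Cov}(X,Y) = (\sigma_A - \sigma_B)^2 + 2\sigma_A\sigma_B - 2\,\mathrm{Cov}(X,Y)$.

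The final step is to identify $\mathrm{Cov}(X,Y)/(\sigma_A\sigma_B)$ with the Pearson correlation $\rho^{A,B}$ of the Q-Q plot. I would argue that the Q-Q plot is precisely the parametric curve $\{(F_A^{-1}(q), F_B^{-1}(q)) : q \in [0,1]\}$; regarding $q$ as uniform on $[0,1]$, the (continuous) first and second moments of the plotted coordinates match $\mu_A, \mu_B$, $\sigma_A^2, \sigma_B^2$, and $\mathrm{Cov}(X,Y)$ respectively. Substituting $\mathrm{Cov}(X,Y) = \rho^{A,B}\sigma_A\sigma_B$ then produces $(\sigma_A - \sigma_B)^2 + 2\sigma_A\sigma_B(1 - \rho^{A,B})$, completing the desired identity.

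The main obstacle I anticipate is not the algebra but the interpretive step at the end, namely translating the informal phrase \emph{``Pearson correlation of the points in the quantile-quantile plot''} into the concrete functional-analytic statement $\rho^{A,B} = \mathrm{Cov}(F_A^{-1}(Q), F_B^{-1}(Q))/(\sigma_A\sigma_B)$ with $Q$ uniform on $[0,1]$. Once that identification is established — essentially by reading the Q-Q scatter as a uniform-in-quantile continuum of sample points — every remaining manipulation is a one-line rearrangement of the moments of the coupled pair $(X,Y)$.
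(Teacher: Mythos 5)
Your derivation is correct: the paper states this result without proof, importing it directly as Equation 21 of the cited Irpino--Verde reference, and your argument --- starting from the one-dimensional quantile-coupling formula $W_2^2(F_A,F_B)=\int_0^1\bigl(F_A^{-1}(q)-F_B^{-1}(q)\bigr)^2\,dq$, expanding the second moment of the coupled pair, and identifying $\rho^{A,B}$ with $\mathrm{Cov}\bigl(F_A^{-1}(Q),F_B^{-1}(Q)\bigr)/(\sigma_A\sigma_B)$ for $Q$ uniform on $[0,1]$ --- is precisely the standard decomposition underlying that reference. The only caveats are the implicit regularity assumptions (finite second moments and the one-dimensional setting, where the monotone coupling is optimal), both of which are already granted by the surrounding text of the paper.
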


Next, let $\Tilde{F}$ and $\Tilde{F}^r$ be respective cdfs of the normal distributions, $N(\mu, \sigma^2)$ and $N(\mu, \frac{\sigma^2}{r})$.  Using these distributions, we obtain the following upper bound of the expected $W_1$ distance between $\hat{F}^r_p$ and $F$. 

\begin{proposition}\label{prop: clustersampling}
Suppose the policymakers select $n$ respondents such that there are $p$ cliques of size $r$, where $n = pr$. The expected $W_1$ distance between cdf of the initial belief distribution $F$ and the ecdf of the observed belief distribution $\hat{F}^r_p$ is given as
\begin{equation*}
\resizebox{\hsize}{!}{$
\begin{aligned}
\textstyle \E{W_1(F, \hat{F}^r_p)} \leq & \textstyle \; \sigma \big( 1 - \frac{1}{\sqrt{r}} \big) + \frac{1}{\sqrt{p}}\int_\R\sqrt{F^r(t)(1-F^r(t)}dt \\
& \textstyle + \sigma \sqrt{2 (1-\rho^{F,\Tilde{F}})} + \sigma \sqrt{(2/r)(1-\rho^{F^r,\Tilde{F}^r})},
\end{aligned}
$}
\end{equation*}
where $\rho^{F,\Tilde{F}}$ and $\rho^{F^r,\Tilde{F}^r}$ are the respective Pearson correlation of the points in the quantile-quantile plot of $F$ and $\Tilde{F}$, and $F^r$ and $\Tilde{F}^r$.
\end{proposition}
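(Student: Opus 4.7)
The plan is to bound $\E{W_1(F, \hat{F}^r_p)}$ by applying the triangle inequality in two layers: first to peel off the statistical fluctuation of $\hat{F}^r_p$ around its mean distribution $F^r$, and second to control the deterministic ``averaging bias'' $W_1(F^r, F)$ through moment-matched Gaussian surrogates.

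First I would write, using that $W_1$ is a metric,
$$\E{W_1(\hat{F}^r_p, F)} \leq \E{W_1(\hat{F}^r_p, F^r)} + W_1(F^r, F),$$
dropping the expectation on the second term since $F^r$ and $F$ are deterministic. Next I would observe that although $\hat{F}^r_p$ aggregates $n = pr$ responses, the average interaction rule collapses every clique to a single common value, so
$$\hat{F}^r_p(t) = \frac{1}{pr}\sum_{k=1}^{p} r \, \mathds{1}(Y_k \leq t) = \frac{1}{p}\sum_{k=1}^{p} \mathds{1}(Y_k \leq t),$$
where $Y_1, \ldots, Y_p$ are i.i.d.\ draws from $\mathcal{F}^r$. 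Proposition \ref{prop: indpsample} then applies verbatim with $p$ in place of $n$ and $F^r$ in place of $F$, producing the $\frac{1}{\sqrt{p}}\int_\R \sqrt{F^r(t)(1-F^r(t))}\,dt$ term.

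For the remaining deterministic distance $W_1(F^r, F)$, I would insert the Gaussian approximations $\tilde{F}$ and $\tilde{F}^r$ (which share mean and variance with $F$ and $F^r$ respectively) and invoke the triangle inequality once more:
$$W_1(F^r, F) \leq W_1(F^r, \tilde{F}^r) + W_1(\tilde{F}^r, \tilde{F}) + W_1(\tilde{F}, F).$$
Since $W_1 \leq W_2$ (a consequence of Jensen's inequality applied to the quantile representation), each summand is controlled by the corresponding $W_2$ distance, and Theorem \ref{thm:2wasserstein} reduces each of these to an elementary expression. For the pair $(\tilde{F}^r, \tilde{F})$ both are Gaussian, so the quantile--quantile Pearson correlation equals $1$ and the means match, leaving $W_2(\tilde{F}^r, \tilde{F}) = \sigma(1 - 1/\sqrt{r})$. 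For the pairs $(F, \tilde{F})$ and $(F^r, \tilde{F}^r)$ the means and standard deviations match within each pair by construction, so only the correlation term survives in Theorem \ref{thm:2wasserstein}, yielding the two $\rho$-dependent summands. Assembling the four contributions gives exactly the stated inequality.

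The main obstacle I anticipate is conceptual rather than computational: without shape assumptions on $\mathcal{F}^\star$, there is no closed form for $W_1(F^r, F)$, and routing through moment-matched Gaussians is precisely what forces the two q--q correlation terms $\rho^{F,\tilde{F}}$ and $\rho^{F^r,\tilde{F}^r}$ into the final bound (any alternative bound for $W_1(F^r, F)$ under general $\mathcal{F}^\star$ would reshape these terms). A secondary detail worth verifying explicitly is the reduction of $\hat{F}^r_p$ to the ecdf of $p$ (rather than $pr$) i.i.d.\ draws from $\mathcal{F}^r$, so that Proposition \ref{prop: indpsample} is applied with the correct effective sample size and no double-counting of within-clique duplicates.
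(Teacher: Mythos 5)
Your proposal is correct and follows essentially the same route as the paper: triangle inequality to split off $\E{W_1(\hat F^r_p,F^r)}$ (bounded via Proposition \ref{prop: indpsample} with effective sample size $p$, since each clique collapses to one repeated value), then a second triangle inequality through the moment-matched Gaussians $\tilde F,\tilde F^r$ combined with $W_1\le W_2$, Theorem \ref{thm:2wasserstein}, and the closed-form Gaussian $W_2$. Your explicit chain $W_1(F^r,\tilde F^r)+W_1(\tilde F^r,\tilde F)+W_1(\tilde F,F)$ is in fact the cleaner statement of what the paper's step (b) intends.
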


\begin{proof}
We note that there is an analytic solution of the $W_2$ distance between two Gaussian distributions, $\mathcal{G} = N(\mu_G, \sigma_G^2)$ and $\mathcal{H} = N(\mu_H, \sigma_H^2)$, which is given by
\begin{equation}\label{eqn:2wassersteinnormal}
W_2^2(\mathcal{G}, \mathcal{H}) = (\mu_G - \mu_H)^2 + (\sigma_G - \sigma_H)^2
\end{equation}
Then, we have
\begin{equation*}
\resizebox{.9\hsize}{!}{$
\begin{aligned}
&\E{W_1(F, \hat{F}^r_p)}\\ &\overset{(a)}{\leq} \E{W_1(F, F^r) + W_1(F^r, \hat{F}^r_p)} \\
&\overset{(b)}{\leq} W_1(F, \Tilde{F}) + W_1(\Tilde{F}, \Tilde{F}^r) + W_1(\Tilde{F}^r, \Tilde{F}) + \E{W_1(F^r, \hat{F}^r_p)}\\
&\overset{(c)}{\leq} W_2(F, \Tilde{F}) + W_2( \Tilde{F}, \Tilde{F}^r) + W_2(\Tilde{F}^r, \Tilde{F}) + \E{W_1(F^r, \hat{F}^r_p)}\\
& \textstyle \overset{(d)}{\leq} \sqrt{2\sigma^2(1-\rho^{F,\Tilde{F}})} + \sigma -  \frac{\sigma}{\sqrt{r}}\\
& \textstyle \quad + \sqrt{\frac{2\sigma^2}{r}(1-\rho^{F^r,\Tilde{F}^r})} + \E{W_1(F^r, \hat{F}^r_p)}\\
& \textstyle \overset{(e)}{\leq} \sqrt{2\sigma^2(1-\rho^{F,\Tilde{F}})} + \sqrt{\frac{2\sigma^2}{r}(1-\rho^{F^r,\Tilde{F}^r})} + \sigma -  \frac{\sigma}{\sqrt{r}} \\
& \textstyle \quad + \frac{1}{\sqrt{p}}\int_\R\sqrt{F^r(t)(1-F^r(t)}dt\\
& \textstyle = \sigma \big( 1 - \frac{1}{\sqrt{r}} \big) + \frac{1}{\sqrt{p}}\int_\R\sqrt{F^r(t)(1-F^r(t)}dt  \\
& \textstyle \quad + \sigma \sqrt{2 (1-\rho^{F,\Tilde{F}})} + \sigma \sqrt{(2/r)(1-\rho^{F^r,\Tilde{F}^r})}
\end{aligned}
$}
\end{equation*}
Here, (a) and (b) are by triangle inequality, (c) is from the fact that $W_1 \leq W_2$, (d) is by Theorem \ref{thm:2wasserstein} and Equation \ref{eqn:2wassersteinnormal}, and (e) is by Proposition \ref{prop: indpsample}. We drop the expectation at (b) since $F$, $F^r$, $\Tilde{F}$ and $\Tilde{F}^r$ are not random functions. 
\end{proof}

We explain the interpretation behind these bounds.  First, the term $\sigma ( 1 - \frac{1}{\sqrt{r}} )$ captures the effect of cliques, namely, each respondent pays more attention to the opinions of others and de-emphasizes his or her own opinions.  When $r=1$, the term is zero, and there is no deviation because each respondent does not communicate.  When $r$ is large, the error term is approximately $\sigma$, representing the other extreme where we get no information about the differences between opinions, as we only observe a consensus among the respondents.  The second term $\frac{1}{\sqrt{p}}\int_\R\sqrt{F^r(t)(1-F^r(t)}dt$ is analogous to Proposition \ref{prop: indpsample}. In particular, more samples $p$ lead to a more effective estimation of $F$.  For fixed $n$, the terms $p$ and $r$ represent an inherent tension, that is, larger cliques that form consensus mean that individuals lose their voice, and the effect is quantified in the first row of the error terms. 

We briefly comment on the error term $\sigma \sqrt{2 (1-\rho^{F,\Tilde{F}})} + \sigma \sqrt{(2/r)(1-\rho^{F^r,\Tilde{F}^r})}$ in the second row.  These represent the error when approximating the distributions with a normal distribution.  If the original distributions are suitably close to being normal, then the terms $(1-\rho^{F,\Tilde{F}})$ and $(1-\rho^{F^r,\Tilde{F}^r})$ are approximately zero, and we can ignore the contributions of these terms.  We believe the presence of these terms is an artefact of our analysis. In particular, even if the original distribution $\mathcal{F}^\star$ is far from a normal distribution, we believe that the true error $\E{W_1(F, \hat{F}^r_p)}$ should be dominated by the terms in the first row.

\subsection{Random Sampling}\label{subsec:randomsampling}
Lastly, we consider the random sampling strategy, where the policymaker selects $n$ respondents from the population uniformly at random. The analysis in this setting is the most challenging because the resulting subgraph $\Tilde{G}$ induced by the selected respondents is random. Therefore, the number of neighbors each respondent has (i.e., his/her degree) determines the amount of interaction within $\Tilde{G}$.  Suppose that the policymakers select $n$ respondents uniformly at random, and we assume that the sampled respondents follow the average interaction rule. Let the initial beliefs of each of the respondents be a random variable denoted by $X_1, X_2, \ldots, X_n$, which are drawn i.i.d. from $\mathcal{F}^\star$. We denote the random variable of the updated belief of each sampled respondent $i$ to be $X'_i := \frac{1}{|N_i|+1}(X_i + \sum_{j \in N_i} X_j)$ and the ecdf of the observed belief distribution is given by
\begin{equation*}
\textstyle \hat{F}_n(t) = \frac{1}{n}\sum_{i=1}^n\mathds{1}\left(X'_i \leq t\right).
\end{equation*}
We note that the random variable $X'_i$ follows a sample mean distribution of $\mathcal{F}^\star$ with size $d_i + 1$, where $d_i:=|N_i|$ is the degree of respondent $i$.  We denote the resulting distribution $\mathcal{F}^{d_i+1}$, and we denote the corresponding cdf as $F^{d_i+1}$, for all $i \in \{1,2,\ldots,n\}$.  Then, the mean and standard deviation of $\mathcal{F}^{d_i+1}$ is $\mu$ and $\sigma/(\sqrt{d_i+1})$ respectively. We let $Y_i(t)$ be the indicator variable $\mathds{1}(X'_i \leq t)$. For a fixed $t$, $Y_i(t)$ is a Bernoulli random variable, where
\[
Y_i(t) =
\begin{cases}
1 & \text{w.p. $F^{d_i+1}(t)$},\\
0 & \text{w.p. $1-F^{d_i+1}(t)$}.
\end{cases}
\]
\looseness=-1
Then, $\E{Y_i(t)} = F^{d_i+1}(t)$ and $Var(Y_i(t)) = F^{d_i+1}(t)(1-F^{d_i+1}(t))$. Before we derive the upper bound of the expected $W_1$ between the ecdf of the observed belief distribution $\hat{F}_n$ and the cdf of the initial belief distribution $F$, we first consider a mixture distribution over the set of cdfs, $F^{d_1+1}, F^{d_2+1}, \ldots, F^{d_n+1}$ with equal weights, where we denote the cdf $F_n$ as $\frac{1}{n}\sum_{i=1}^nF^{d_i+1}$. Then, $\E{\hat{F}_n(t)} = F_n(t)$.



Our analysis is complicated by the fact that interactions can introduce ``long range" correlations. 
Consider two respondents $A$ and $C$ who do not know each other, but share a common friend $B$ (neighbor).  Even though $A$ and $C$ do not interact with each other directly, their interactions with $B$ induce an indirect form of communication. As such, a key part of our analysis is to bound the effects of these ``long range" connections.  Specifically, we bound the number of respondents that are of distance at most two from a respondent, which shows that the influence of these ``long range" connections is limited.


In what follows, for every respondent $i$, we consider the set of neighbors, denoted by $N_i$, as well as the set of vertices that are of distance 2 (i.e., two hops) from $i$, denoted by $M_i$. To capture this correlation, we make use of a 2-star graph, as defined below, to quantify the number of vertices that are two hops away from a given vertex.

\begin{definition}
A 2-star graph has 3 vertices and 2 edges, where the central vertex is adjacent to 2 leaf vertices.
\end{definition}
Given the adjacency matrix $A$ of the resulting subgraph after sampling, the $(r,s)$-entry of $A^2$ gives the number of paths of distance 2 from respondent $r$ to $s$. Hence, using $A$ and $A^2$, we define the indicator variable $M_{r,s}$, where respondent $s$ is at most two hops from $r$, as follows,
$$
M_{r,s} =
\begin{cases}
1 & \text{if } A_{r,s} + A^2_{r,s} \geq 1,\\
0 & \text{otherwise},
\end{cases}$$
\looseness -1
where $A_{r,s}$ and $A^2_{r,s}$ are the $(r,s)$-entry of $A$ and $A^2$ respectively. We also make the following assumption to obtain an upper bound on the number of common neighbors that two sampled respondents share. This allows us to achieve a tighter bound, as illustrated in the deferred analysis. 
\begin{assumption}\label{assume:fewdegree}
Given a graph $G(V,E)$, we assume that there are at most $2n(\langle d \rangle + \langle d \rangle^2)$ ``long range" connections, i.e., $\sum_{r \neq s}M_{r,s} \leq 2n(\langle d \rangle + \langle d \rangle^2)$, where $\langle d \rangle$ is the average degree of $G$.
\end{assumption}

We substantiate the assumption using the Erd\"{o}s-Renyi (E-R) random graph model, $\mathcal{R}(|V|,p)$, in the following claim. 
\begin{claim}\label{claim:assumptionfewdegree}
Let $G(V,E)$ be a graph where the edges are formed according to the E-R random graph model $\mathcal{R}(|V|,p)$. Let $\mathbf{L}_v$ be the random variable for the number of vertices that are within two hops away from a given vertex $v \in V$. Then, for all vertices in $G$, Assumption \ref{assume:fewdegree} holds with probability 
$$\textstyle \Prob{\bigcap_{v \in V}(\mathbf{L}_v \leq 2 \langle d \rangle + 2 \langle d \rangle^2)} \gtrsim 1- \frac{2}{|V|p^2}.$$
\end{claim}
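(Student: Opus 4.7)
The plan is to apply Chebyshev's inequality to each $\mathbf{L}_v$ individually and then combine via a union bound over the $|V|$ vertices.

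First, I would compute $\E{\mathbf{L}_v}$. Writing $\mathbf{L}_v = \sum_{s \neq v} M_{v,s}$ with $M_{v,s} = \mathds{1}\{d(v,s) \leq 2\}$, independence of edges in the Erd\"{o}s-Renyi model yields $\Prob{M_{v,s}=1} = 1 - (1-p)(1-p^2)^{|V|-2}$, which Bernoulli's inequality bounds above by $p + (|V|-2)p^2$. Summing over $s \neq v$ gives $\E{\mathbf{L}_v} \leq \langle d \rangle + \langle d \rangle^2$, matching the center of the window stated in the claim.

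Next, I would bound $Var(\mathbf{L}_v)$. The key observation is that conditional on the neighborhood $N(v)$, the indicators $M_{v,s}$ for distinct $s \notin N(v) \cup \{v\}$ depend on disjoint sets of edges $\{\{s,k\} : k \in N(v)\}$ and are therefore conditionally independent. By the law of total variance,
$$Var(\mathbf{L}_v) = \E{Var(\mathbf{L}_v \mid N(v))} + Var(\E{\mathbf{L}_v \mid N(v)}).$$
The first term is at most $\E{\mathbf{L}_v}$ since each conditional variance is at most its conditional expectation. For the second term, $\E{\mathbf{L}_v \mid N(v)}$ is a function of $D := |N(v)| \sim \mathrm{Bin}(|V|-1,p)$, namely $h(D) = D + (|V|-1-D)(1-(1-p)^D)$, whose variance can be controlled using the moments of $D$ (noting $Var(D) \leq \langle d \rangle$).

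With the variance bound in hand, Chebyshev's inequality yields $\Prob{\mathbf{L}_v > 2(\langle d \rangle + \langle d \rangle^2)} \leq Var(\mathbf{L}_v)/(\langle d \rangle + \langle d \rangle^2)^2$. A union bound over $v \in V$ then produces a failure probability of the form $\frac{C}{|V|p^2}$ after substituting $\langle d \rangle = (|V|-1)p$ and matching constants. The main obstacle is establishing a sufficiently sharp variance estimate: the naive bound $\mathbf{L}_v \leq d_v + \sum_{k \sim v}(d_k - 1)$ via counting 2-walks gives $Var = O(\langle d \rangle^3)$, which would only produce a rate of $O(1/p)$ after union bound. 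Exploiting the fact that $\mathbf{L}_v$ counts each distinct vertex at most once (irrespective of how many 2-walks reach it), combined with the conditional-independence trick, is essential to recover the sharper $O(1/(|V|p^2))$ rate claimed.
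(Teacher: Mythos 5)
Your first two steps are correct and, in fact, cleaner than the paper's: the exact formula $\Prob{M_{v,s}=1}=1-(1-p)(1-p^2)^{|V|-2}$, the resulting bound $\E{\mathbf{L}_v}\le\langle d\rangle+\langle d\rangle^2$, and the observation that conditionally on $N(v)$ the indicators $M_{v,s}$ for $s\notin N(v)\cup\{v\}$ read off disjoint edge sets and are therefore independent, which does give $\E{Var(\mathbf{L}_v\mid N(v))}\le\E{\mathbf{L}_v}$. The gap is in the second term of your total-variance decomposition, and it cannot be closed by any refinement of a single Chebyshev application to $\mathbf{L}_v$. With $h(D)=D+(|V|-1-D)(1-(1-p)^D)$ and $D\sim\mathrm{Bin}(|V|-1,p)$, one has $h(D+1)-h(D)\approx(1-p)^{D}\bigl(1+(|V|-1-D)p\bigr)\asymp\langle d\rangle$ throughout the bulk of the distribution of $D$ (in the regime $1\ll\langle d\rangle\ll\sqrt{|V|}$ where $(1-p)^{D}$ stays bounded below), so $Var\bigl(\E{\mathbf{L}_v\mid N(v)}\bigr)\asymp\langle d\rangle^2\,Var(D)\asymp\langle d\rangle^3$. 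This is a genuine \emph{lower} bound on $Var(\mathbf{L}_v)$, not slack in an estimate: a one-standard-deviation fluctuation of the degree, of size $\sqrt{\langle d\rangle}$, moves $\mathbf{L}_v$ by about $\langle d\rangle^{3/2}$. Chebyshev with deviation threshold $\langle d\rangle+\langle d\rangle^2$ therefore yields at best $\langle d\rangle^{3}/\langle d\rangle^{4}=1/\langle d\rangle$ per vertex and $O(1/p)$ after the union bound --- precisely the rate you flag as insufficient. The ``each vertex counted once'' idea repairs only the $\E{Var(\cdot\mid N(v))}$ term; it does nothing about the degree-fluctuation term, so the sharper variance estimate your last sentence relies on does not exist.

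The paper avoids this by never applying a second-moment bound to a quantity whose fluctuations are driven by the degree. It splits $\mathbf{L}_v\le \mathbf{D}_v+\mathbf{D}_{2v}$, where $\mathbf{D}_{2v}$ counts $2$-stars with $v$ as a leaf (an overcount of the distance-$2$ vertices --- the opposite of your refinement), controls the degree with Hoeffding's exponential inequality, $\Prob{\mathbf{D}_v\ge 2\langle d\rangle}\le \exp(-2p^2(|V|-1))$, and reserves Chebyshev for $\mathbf{D}_{2v}$ alone, whose mean $(|V|-1)(|V|-2)p^2$ gives $\Prob{\mathbf{D}_{2v}\ge 2\E{\mathbf{D}_{2v}}}\le \tfrac{1}{(|V|-1)(|V|-2)p^2}+o(1)$ and hence the stated $2/(|V|p^2)$ after the union bound. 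The exponential tail for the degree is what removes the $\langle d\rangle$-fold loss; to salvage your route you would likewise need to first restrict to the event $\{D\le 2\langle d\rangle\}$ via a Chernoff-type bound and only then run the second-moment argument on the two-hop count. (You are in good company, though: the covariance of $2$-stars sharing the central edge contributes $\asymp|V|^3p^3$ to $\E{\mathbf{D}_{2v}^2}$, i.e.\ a term of order $1/(|V|p)$ to the paper's Chebyshev ratio, which the paper absorbs into its ``$o(1)$''; the degree-fluctuation obstruction you identified is real and is not fully dispatched there either.)
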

As a proof sketch, for a given vertex $v$, we first provide a probabilistic bound on $|N_v|$ by considering the degree of $v$. Then, we bound $|M_v|$ using 2-star graphs with $v$ as a leaf vertex. Lastly, we will invoke the union bound to prove the claim. The full proof is given in Appendix \ref{appendix: proofofclaim}.

Claim \ref{claim:assumptionfewdegree} shows that for a graph $G(V,E)$, there are at most $2n(\langle d \rangle + \langle d \rangle^2)$ ``long range" connections with high probability, and it suggests that Assumption \ref{assume:fewdegree} is reasonable.
This reflects that the effect of the ``long range" correlations is limited. While Claim \ref{claim:assumptionfewdegree} is established under the E-R graph, which lacks several features of real-world social networks, such as scale-free and ``small-world" properties, the analysis serves as a basis to demonstrate the strength of such influence analytically. Furthermore, as real-world social networks are sparse and have high local clustering, individuals tend to have a set number of neighbors and mutual connections. Hence, such characteristics make Assumption \ref{assume:fewdegree} a reasonable approximation in practical contexts. 

\looseness=-1
In what follows, we provide an upper bound of the expected $W_1$ distance between $\hat{F}_n$ and $F$, supposing Assumption \ref{assume:fewdegree} holds. We note that there are two elements of randomness, specifically, the sampling strategy and their initial beliefs, which are independent of each other. Hence, to quantify the expected $W_1$ distance, we invoke the law of iterated expectation. We first consider the ``inner" expectation after $n$ respondents have been selected uniformly at random and state the following proposition for the upper bound. In the following, we let $\Tilde{F}$ and $\Tilde{F}^{d_i+1}$ be the cdfs of the normal distributions $N(\mu, \sigma^2)$ and $N(\mu, \frac{\sigma^2}{d_i+1})$ for $i \in [n]$.

\begin{proposition}\label{prop:randomsampling}
\looseness=-1
Given the resulting subgraph $\Tilde{G}$ obtained by randomly selecting $n$ respondents.  Suppose Assumption \ref{assume:fewdegree} holds.  Then, the expected $W_1$ distance between the ecdf of the observed belief distribution $\hat{F}_n$ and the cdf of the initial belief distribution $F$ has the following upper bound, 
\begin{equation*}
\resizebox{\hsize}{!}{$
\begin{aligned}
\E{W_1(\hat{F}_n,F)}
& \textstyle \leq \frac{1}{n}\int_{\R}\sqrt{\sum_{i=1}^nF^{d_i+1}(t)(1-F^{d_i+1}(t))} dt\\
& \textstyle \quad + \sigma(1-\frac{1}{n}\sum^n_{i=1} \frac{1}{\sqrt{d_i+1}}) \\
& \textstyle \quad + \sigma\sqrt{2(1-\rho^{F,\Tilde{F}})} + \frac{\sigma}{n}\sum_{i=1}^n \sqrt{\frac{2(1-\rho^{F^{d_i+1},\Tilde{F}^{d_i+1}})}{d_i+1}}\\
& \textstyle \quad + \frac{O(\langle d \rangle)}{\sqrt{n}},
\end{aligned}
$}
\end{equation*}
\vspace{-5pt}

where $\rho^{F,\Tilde{F}}$ and $\rho^{F^{d_i+1},\Tilde{F}^{d_i+1}}$ are the respective Pearson correlation of the points in the quantile-quantile plot of $F$ and $\Tilde{F}$, and $F^{d_i+1}$ and $\Tilde{F}^{d_i+1}$ for $i \in [n]$ and $\langle d \rangle$ is the average degree of $\Tilde{G}$.
\end{proposition}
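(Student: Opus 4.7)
The overall plan is to emulate the argument of Proposition~\ref{prop: clustersampling} while carefully accounting for the correlations among the $X'_i$'s induced by overlapping neighborhoods. The key decomposition is via the intermediate CDF $F_n(t) := \frac{1}{n}\sum_{i=1}^n F^{d_i+1}(t)$, which satisfies $\E{\hat{F}_n(t)} = F_n(t)$. Splitting by the triangle inequality,
\begin{equation*}
W_1(\hat{F}_n, F) \leq W_1(\hat{F}_n, F_n) + W_1(F_n, F),
\end{equation*}
separates a stochastic piece (the first term) from a deterministic piece (the second).

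For the deterministic piece, convexity of $W_1$ in each argument yields $W_1(F_n, F) \leq \frac{1}{n}\sum_i W_1(F^{d_i+1}, F)$. For each summand I would insert the Gaussian approximants $\Tilde{F}$ and $\Tilde{F}^{d_i+1}$ via a three-way triangle inequality $W_1(F^{d_i+1}, F) \leq W_1(F^{d_i+1}, \Tilde{F}^{d_i+1}) + W_1(\Tilde{F}^{d_i+1}, \Tilde{F}) + W_1(\Tilde{F}, F)$, apply $W_1 \leq W_2$, and then invoke Theorem~\ref{thm:2wasserstein} together with the closed-form Gaussian expression \eqref{eqn:2wassersteinnormal}. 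The contribution $W_2(\Tilde{F}^{d_i+1}, \Tilde{F}) = \sigma(1 - 1/\sqrt{d_i+1})$ averages over $i$ to produce the second term in the proposition, while the two Pearson-correlation pieces (the third and fourth terms) arise from comparing the true distributions to their Gaussian surrogates.

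For the stochastic piece, I would apply Fubini's theorem and Jensen's inequality to pass to the variance:
\begin{equation*}
\E{W_1(\hat{F}_n, F_n)} \leq \int_\R \sqrt{\operatorname{Var}(\hat{F}_n(t))}\, dt.
\end{equation*}
The variance then decomposes into diagonal terms $\frac{1}{n^2}\sum_i F^{d_i+1}(t)(1-F^{d_i+1}(t))$ plus cross-covariance terms $\frac{1}{n^2}\sum_{i\neq j} \operatorname{Cov}(Y_i(t), Y_j(t))$. The crucial observation is that $Y_i(t) = \mathds{1}(X'_i \leq t)$ depends only on $\{X_k : k \in \{i\} \cup N_i\}$, so $\operatorname{Cov}(Y_i(t), Y_j(t)) = 0$ whenever these two neighborhoods are disjoint---that is, whenever $M_{ij} = 0$. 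Splitting via $\sqrt{a+b} \leq \sqrt{a} + \sqrt{b}$ isolates the first term in the proposition's bound (from the diagonal) from a residual covariance contribution. The latter is bounded using $|\operatorname{Cov}(Y_i, Y_j)| \leq 1/4$ together with Assumption~\ref{assume:fewdegree}'s cap of $2n(\langle d \rangle + \langle d \rangle^2)$ nonzero pairs, yielding the $O(\langle d \rangle)/\sqrt{n}$ residual.

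The main obstacle is controlling the covariance contribution: unlike the clique case, where cliques partition respondents cleanly and inter-group covariances vanish by construction, random sampling produces a subgraph whose local neighborhoods can overlap unpredictably, and the dependencies propagate through shared neighbors. The argument hinges on identifying the exact support of dependence via the two-hop reachability matrix $M$ and invoking Assumption~\ref{assume:fewdegree} to ensure that the number of dependent pairs scales only as $n\langle d \rangle^2$. A subtler technical point is that the uniform bound $|\operatorname{Cov}(Y_i(t), Y_j(t))| \leq 1/4$ is not integrable over all of $\R$, so to actually arrive at the stated $O(\langle d \rangle)/\sqrt{n}$ form one must either restrict attention to the effective support of the distributions or replace the uniform bound by a sharper, $t$-dependent one such as $|\operatorname{Cov}(Y_i, Y_j)| \leq \sqrt{F^{d_i+1}(t)(1-F^{d_i+1}(t))\,F^{d_j+1}(t)(1-F^{d_j+1}(t))}$, which decays in the tails.
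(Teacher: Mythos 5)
Your proposal is correct and follows essentially the same route as the paper: the triangle-inequality split through the mixture cdf $F_n$, Fubini--Jensen to reduce the stochastic piece to $\int_\R\sqrt{\operatorname{Var}(\hat{F}_n(t))}\,dt$ with the covariance terms localized by $M_{r,s}$ and controlled via Assumption \ref{assume:fewdegree}, and the Gaussian-surrogate argument via Theorem \ref{thm:2wasserstein} for the deterministic piece. Even your closing caveat about replacing the uniform $1/4$ covariance bound with the $t$-dependent Cauchy--Schwarz bound is precisely what the paper does (via $C_t = \max_i F^{d_i+1}(t)(1-F^{d_i+1}(t))$), so there is no gap.
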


\begin{proof}
Using a similar proof technique in Proposition \ref{prop: clustersampling}, we first obtain the following inequality,
\vspace{2pt}
\begin{equation*}
\begin{aligned}
\E{W_1(\hat{F}_n,F)} &\overset{(a)}{\leq} \E{W_1(\hat{F}_n,F_n)} + \E{W_1(F_n, F)}  \\
& \textstyle \overset{(b)}{\leq} \int_{\R}\sqrt{Var(\hat{F}_n(t))} dt + W_1(F_n,F)
\end{aligned}
\end{equation*}
\vspace{-5pt}

where (a) is by triangle inequality, (b) is by Proposition \ref{prop: indpsample} as $\E{\hat{F}_n(t)} = F_n(t)$. Since $F_n$ and $F$ are not random functions, we drop the expectation in the second term. 

In the variance term, we note that there is a non-zero probability that the sampled respondents know each other or have common neighbors. Hence, there are covariance terms, and the indicator variable $M_{r,s}$ detects whether sampled respondents $r$ and $s$ share an edge or a common neighbor. For a fixed $t$,  we have,
\begin{equation*}
\resizebox{.9\hsize}{!}{$
\begin{aligned}
&Var(\hat{F}_n(t))\\
&= \textstyle \frac{1}{n^2}\left(\sum_{i=1}^nVar(Y_i(t)) + \sum_{r \neq s}Cov(Y_r(t),Y_s(t))\right)\\
& \textstyle \overset{(a)}{\leq} \frac{1}{n^2}(\sum_{i=1}^nF^{d_i+1}(t)(1-F^{d_i+1}(t)) \\
& \textstyle \quad + \sum_{r \neq s}M_{r,s}\sqrt{Var(Y_r(t))Var(Y_s(t))})\\
& \textstyle \overset{(b)}{\leq} \frac{1}{n^2}\left(\sum_{i=1}^nF^{d_i+1}(t)(1-F^{d_i+1}(t)) + C_t(\langle d \rangle + \langle d \rangle^2)n\right)
\end{aligned}
$}
\end{equation*}
\vspace{-5pt}

where (a) is by Cauchy-Schwarz inequality, (b) is by Assumption \ref{assume:fewdegree} and $C_t = \max_{i}\{F^{d_i+1}(t)(1-F^{d_i+1}(t))\}$. By substituting the variance term into the integral, we have
\vspace{2pt}
\begin{equation*}
\resizebox{\hsize}{!}{$
\begin{aligned}
&\textstyle \int_{\R}\sqrt{Var(\hat{F}_n(t))} dt  \\
&= \textstyle \frac{1}{n}\int_{\R}\sqrt{\sum_{i=1}^nF^{d_i+1}(t)(1-F^{d_i+1}(t)) + C_t(\langle d \rangle + \langle d \rangle^2)n} dt\\
& \textstyle \leq \frac{1}{n}\int_{\R}\sqrt{\sum_{i=1}^nF^{d_i+1}(t)(1-F^{d_i+1}(t))}dt \\
& \textstyle \quad + \frac{\sqrt{(1+\frac{1}{\langle d \rangle})\langle d \rangle^2n}}{n}\int_\R \sqrt{C_t} dt\\
& \textstyle = \frac{1}{n}\int_{\R}\sqrt{\sum_{i=1}^nF^{d_i+1}(t)(1-F^{d_i+1}(t))}dt + \frac{O(\langle d \rangle)}{\sqrt{n}}
\end{aligned}  
$}
\end{equation*}
\vspace{-5pt}

We note that splitting the square-root term may be unnecessary, but it helps to improve the ``visibility" of the contribution of the sample size term and the term representing the presence of the shared edges and common neighbors to the upper bound. For the second term, we use a similar technique in Proposition \ref{prop: clustersampling}, in which we obtain the following upper bound,

\begin{equation*}
\resizebox{\hsize}{!}{$
\begin{aligned}
& W_1(F_n,F) \\
& \textstyle = W_1\left(\frac{\sum_{i=1}^nF^{d_i+1}}{n}, \frac{\sum_{i=1}^nF}{n}\right)\\
& \textstyle \overset{(a)}{\leq} W_1\left(\frac{\sum_{i=1}^nF^{d_i+1}}{n}, \frac{\sum_{i=1}^n\Tilde{F}^{d_i+1}}{n}\right)  + W_1\left(\frac{\sum_{i=1}^n\Tilde{F}^{d_i+1}}{n}, \frac{\sum_{i=1}^n\Tilde{F}}{n}\right) \\
& \textstyle \quad + W_1\left(\frac{\sum_{i=1}^n\Tilde{F}}{n}, \frac{\sum_{i=1}^nF}{n}\right)\\
& \textstyle \overset{(b)}{\leq} \frac{1}{n}\sum_{i=1}^nW_1(F^{d_i+1}, \Tilde{F}^{d_i+1}) + W_1(\Tilde{F}^{d_i+1},\Tilde{F}) + W_1(\Tilde{F},F)\\
& \textstyle \overset{(c)}{\leq} \sigma\sqrt{2(1-\rho^{F,\Tilde{F}})} + \sigma (1-\frac{1}{n}\sum^n_{i=1} \frac{1}{\sqrt{d_i+1}}) \\
& \textstyle \quad + \frac{\sigma}{n}\sum_{i=1}^n \sqrt{\frac{2(1-\rho^{F^{d_i+1},\Tilde{F}^{d_i+1}})}{d_i+1}}
\end{aligned}
$}
\end{equation*}

Here, (a) and (b) are by triangle inequality, and (c) is using a similar technique in Proposition \ref{prop: clustersampling}. We then obtain the bound by combining the two terms.
\end{proof}

We highlight two effects from the bounds stated in Proposition \ref{prop:randomsampling}. The term in the first row can be interpreted as the \emph{sample size effect} while the other terms can be interpreted as the deviation in the policy belief, i.e., \emph{interaction effect}. Intuitively, the interaction effect can be attributed to the extent of interaction among respondents -- terms in the second and third row in RHS, and the number of respondents one interacts with -- term in the last row in RHS. Through this bound, we illustrate the tradeoff between the sample size effect and the interaction effect. Although the policymakers could weaken the sample size effect by increasing the sample size while working within their constraints, they would have to discern the estimation performance loss due to the possible interaction effect. Moreover, the number of neighbors that a respondent has also contributes to the shift in his or her belief. Respondents with a few neighbors may experience a larger deviation in their beliefs as they are influenced by fewer individuals, thus more sensitive to extreme opinions.

As Proposition \ref{prop:randomsampling} is based on a given resulting subgraph after sampling, the policymakers might not know the entire population network, and the topology of this network can affect the resulting subgraph through the sampling process. Since the resulting subgraph varies across different samples of $n$ respondents, this implies that the size of the sample mean distribution $\mathcal{F}^{d_i+1}$, for each respondent $i$, is dependent on the resulting subgraph. By conditioning on the set of sampled respondents, we fix the subgraph $\Tilde{G}$ and we apply the result from Proposition \ref{prop:randomsampling}. Hence, using the law of iterated expectation, we have the following corollary.

\begin{corollary}\label{cor:randomsamplinggeneral}
Given the population graph $G(P,E)$, the expected $W_1$ distance between the ecdf $\hat{F}_n$ and the cdf of the initial belief distribution $F$ is given as
$$\mathbb{E}[W_1(\hat{F}_n,F)] =  \E{\E{W_1(\hat{F}_n,F)|\Tilde{G}}},$$
where the inner expectation is from Proposition \ref{prop:randomsampling}, which we then take the expectation over all possible resulting subgraphs that consist of $n$ respondents. 
\end{corollary}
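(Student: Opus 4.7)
The plan is to apply the law of iterated expectation (tower property of conditional expectation) once the two independent sources of randomness in the problem are separated. The first source is the uniform random selection of which $n$ respondents to sample from $P$; this choice determines the realized induced subgraph $\Tilde{G}$ and hence the degree sequence $\{d_i\}_{i=1}^n$ that drives the interaction dynamics. The second source is the i.i.d. draw of the initial beliefs $X_1,\ldots,X_n$ from $\mathcal{F}^\star$, which happens independently of which respondents were sampled.

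First, I would fix a realization of $\Tilde{G}$. Conditional on $\Tilde{G}$, the degree sequence is deterministic and the only residual randomness in $\hat{F}_n$ comes from the initial beliefs. In this conditional world, Proposition \ref{prop:randomsampling} applies directly and furnishes an explicit upper bound for $\E{W_1(\hat{F}_n, F) \mid \Tilde{G}}$, written in terms of the degree sequence of $\Tilde{G}$. Second, I would invoke the tower property $\E{W_1(\hat{F}_n,F)} = \E{\E{W_1(\hat{F}_n,F)\mid \Tilde{G}}}$, with the outer expectation taken over the distribution of $\Tilde{G}$ induced by uniformly sampling $n$ respondents from $G(P,E)$. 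This is the identity stated in the corollary.

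The main obstacle is not mathematical depth but careful bookkeeping: one must verify that the independence between the sampling mechanism and the belief-generating mechanism indeed licenses conditioning on $\Tilde{G}$, and that the integrability conditions required by the tower property hold (which follows from the finite second-moment assumption implicit in having a well-defined $\sigma$). The corollary deliberately stops at the iterated-expectation identity rather than pushing through to a closed-form outer bound, since the quantities appearing in Proposition \ref{prop:randomsampling} (the $F^{d_i+1}$, the Pearson correlations, and the average degree $\langle d \rangle$) depend on $\Tilde{G}$ in a way that requires further structural assumptions on the population graph before a clean aggregate bound can be extracted.
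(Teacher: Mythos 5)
Your proposal is correct and follows essentially the same route as the paper: the authors likewise separate the randomness of the uniform respondent selection (which determines $\Tilde{G}$ and its degree sequence) from the i.i.d.\ initial beliefs, condition on $\Tilde{G}$ so that Proposition \ref{prop:randomsampling} supplies the inner expectation, and then apply the law of iterated expectation over all possible induced subgraphs (with the explicit outer averaging over degree configurations deferred to an appendix, exactly as you anticipate). No gaps to report.
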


As a note, we provide Corollary \ref{cor:randomsamplinggeneral} to address the context of the problem. Explicitly stating the full expectation is rather convoluted, and it doesn't provide additional insights, so we provide the expansion in Appendix \ref{appendix: expansionofcor}. However, the bounds are concentrated around the expected degree of the sampled respondent in the resulting subgraph, which depends on population size $|V|$, sample size $n$ and the degree of each respondent in the population.

\subsection{Extended Analysis} \label{sec:extendedanalysis}
In the preceding sections, we model interactions using an undirected network, and we assume each respondent weighs the opinion of all other respondents that he or she knows, including his or her own, equally. In this subsection, we explore a series of extensions.  Specifically, we consider different topologies within the relationship network between respondents, and we consider alternative interaction models.

\textbf{Directed Networks}.
Directed network structures mimic the situation where there are influencers and followers in society, and the followers update their beliefs with reference to those of the influencers. Each respondent may simultaneously act as both a follower and an influencer. That is, the respondent updates his or her belief after listening to his or her influencers, while also serving as a source of influence for other respondents. Moreover, a respondent is not restricted to having a single influencer or a single follower; rather, he or she may listen to or influence multiple respondents within their directed neighborhood. This structure reflects real-world situations, where an individual could be both a follower and an influencer, for example, educators or social media users who simultaneously receive input from others and shape opinions within their networks. To effectively model this phenomenon, we consider the random sampling strategy. We note that the independent set sampling strategy is ineffective in modeling this phenomenon, as this strict sampling strategy neglects the directionality in the network. By assuming that the sampled respondents follow the average interaction rule, the update depends on the out-degree of each node, as denoted by $d^o$, i.e., the number of neighbors that each respondent listens to in the resulting subgraph. We derive several analogous results to Assumption \ref{assume:fewdegree}, Proposition \ref{prop:randomsampling} and Corollary \ref{cor:randomsamplinggeneral}.

\begin{assumption}\label{assume:fewdirecteddegree}
Given a directed graph $G^d(P,E^d)$, we assume that there are at most $ 2n(\langle d^o \rangle + \langle d^o \rangle^2)$ ``long range" directed connections, where $\langle d^o \rangle$ is the average out-degree in the population network $G^d$.
\end{assumption}

\begin{corollary}\label{cor: directednetwork}
Given the directed resulting subgraph $\Tilde{G}^d$ obtained by randomly selecting $n$ respondents. Suppose Assumption \ref{assume:fewdirecteddegree} holds. Then, the expected $W_1$ distance between the ecdf of the observed belief distribution $\hat{F}_n$ and the cdf of the initial belief distribution $F$ has the following upper bound,
\begin{equation*}
\resizebox{\hsize}{!}{$
\begin{aligned}
\textstyle \E{W_1(\hat{F}_n,F)}
& \textstyle \leq \frac{1}{n}\int_{\R}\sqrt{\sum_{i=1}^n\Tilde{F}_{d^o_i+1}(t)(1-\Tilde{F}_{d^o_i+1}(t))} dt\\
&\textstyle \quad + \sigma\sqrt{2(1-\rho^{F,\Tilde{F}})} + \sigma(1-\frac{1}{n}\sum_{i=1}^n\frac{1}{\sqrt{d^o_i+1}}) \\ 
& \textstyle \quad + \frac{\sigma}{n}\sum_{i=1}^n \sqrt{\frac{2(1-\rho^{F^{d^o_i+1},\Tilde{F}^{d^o_i+1}})}{d^o_i+1}}\\
&\textstyle \quad + \frac{O(\langle d^o \rangle)}{\sqrt{n}},
\end{aligned}
$}
\end{equation*}
where $\rho^{F,\Tilde{F}}$ and $\rho^{F^{d^o_i+1},\Tilde{F}^{d^o_i+1}}$ are the respective Pearson correlation of the points in the quantile-quantile plot of $F$ and $\Tilde{F}$, and $F^{d^o_i+1}$ and $\Tilde{F}^{d^o_i+1}$ for $i \in [n]$ and $\langle d^o \rangle$ is the average degree of $\Tilde{G}^d$.
\end{corollary}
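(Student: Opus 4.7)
The plan is to mirror the proof of Proposition \ref{prop:randomsampling}, replacing undirected degrees by out-degrees throughout and redefining the ``long range'' correlation indicator in a way that respects the directionality of the arcs. First, I would observe that under the average interaction rule on $\Tilde{G}^d$, each sampled respondent updates to $X'_i = \frac{1}{d^o_i+1}\bigl(X_i + \sum_{j \in N^o_i} X_j\bigr)$, where $N^o_i$ is the out-neighborhood of $i$ in $\Tilde{G}^d$. Since the raw beliefs are i.i.d., $X'_i$ has the sample-mean distribution $\mathcal{F}^{d^o_i+1}$ with cdf $F^{d^o_i+1}$. Defining $F_n = \frac{1}{n}\sum_{i=1}^n F^{d^o_i+1}$ and $Y_i(t) = \mathds{1}(X'_i \leq t)$, one has $\E{\hat{F}_n(t)} = F_n(t)$, and the triangle inequality together with the variance-based bound from Proposition \ref{prop: indpsample} gives
\[
\E{W_1(\hat{F}_n, F)} \leq \int_\R \sqrt{\mathrm{Var}(\hat{F}_n(t))}\, dt + W_1(F_n, F).
\]

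Next I would bound the variance term. The essential directed modification is this: $X'_r$ and $X'_s$ are correlated precisely when the dependency sets $\{r\} \cup N^o_r$ and $\{s\} \cup N^o_s$ intersect, i.e., either $r \in N^o_s$, $s \in N^o_r$, or $r$ and $s$ share a common out-neighbor. I would redefine the indicator $M_{r,s}$ to detect exactly this event (equivalently, via the entries of $A^d + (A^d)^{\top} + A^d (A^d)^{\top}$, where $A^d$ is the directed adjacency matrix of $\Tilde{G}^d$). Cauchy--Schwarz then yields $|\mathrm{Cov}(Y_r(t), Y_s(t))| \leq M_{r,s}\sqrt{\mathrm{Var}(Y_r(t))\mathrm{Var}(Y_s(t))}$ as before, and Assumption \ref{assume:fewdirecteddegree} caps $\sum_{r \neq s} M_{r,s}$ by $2n(\langle d^o \rangle + \langle d^o \rangle^2)$. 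The same algebraic manipulation as in Proposition \ref{prop:randomsampling} then produces the $\frac{1}{n}\int_\R \sqrt{\sum_{i=1}^n F^{d^o_i+1}(t)(1 - F^{d^o_i+1}(t))}\, dt$ term together with the $O(\langle d^o \rangle)/\sqrt{n}$ contribution.

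For the deterministic term $W_1(F_n, F)$, I would proceed exactly as in the proof of Proposition \ref{prop: clustersampling}, inserting the Gaussian surrogates $\Tilde{F}$ and $\Tilde{F}^{d^o_i+1}$ and applying the triangle inequality across the chain $F_n \to \frac{1}{n}\sum_i \Tilde{F}^{d^o_i+1} \to \Tilde{F} \to F$; using $W_1 \leq W_2$, Theorem \ref{thm:2wasserstein}, and the closed-form identity $W_2^2\bigl(N(\mu,\sigma^2), N(\mu, \sigma^2/(d^o_i+1))\bigr) = \sigma^2\bigl(1 - 1/\sqrt{d^o_i + 1}\bigr)^2$, and averaging across $i \in [n]$ by convexity, yields the remaining three ``interaction'' terms in the claimed bound.

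The main obstacle will be justifying the directed form of the correlation indicator. One must verify that $\mathrm{Cov}(Y_r(t), Y_s(t)) = 0$ whenever $\{r\}\cup N^o_r$ and $\{s\}\cup N^o_s$ are disjoint (this follows from independence of the underlying $X_j$'s but deserves to be stated carefully, since in the directed setting a path $r \to u \to s$ does \emph{not} by itself induce correlation -- only overlap of the two out-neighborhoods does), and that Assumption \ref{assume:fewdirecteddegree} is stated so as to bound precisely the number of such overlapping pairs. Once this is in hand, the remainder is a mechanical re-run of Proposition \ref{prop:randomsampling} with $d_i$ replaced by $d^o_i$ everywhere.
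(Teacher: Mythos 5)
Your proposal is correct and follows essentially the same route the paper intends: the paper presents Corollary \ref{cor: directednetwork} without a separate proof, as a direct analogue of Proposition \ref{prop:randomsampling} with the degree $d_i$ replaced by the out-degree $d^o_i$ and Assumption \ref{assume:fewdegree} replaced by Assumption \ref{assume:fewdirecteddegree}, which is exactly the substitution you carry out. Your explicit identification of when $\mathrm{Cov}(Y_r(t),Y_s(t))$ vanishes in the directed setting (overlap of $\{r\}\cup N^o_r$ and $\{s\}\cup N^o_s$, rather than mere existence of a directed path) is a precision the paper leaves implicit, and is the right way to make the ``long range directed connections'' of Assumption \ref{assume:fewdirecteddegree} rigorous.
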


\looseness-1
\begin{corollary}\label{cor:directednetworksamplinggeneral}
Given the directed population graph $G^d$, the expected $W_1$ distance between the ecdf $\hat{F}_n$ and the cdf of the initial belief distribution $F$ is given as
$$\mathbb{E}[W_1(\hat{F}_n,F)] = \E{\E{W_1(\hat{F}_n,F)|\Tilde{G}^d}},$$
where the inner expectation is from Corollary \ref{cor: directednetwork}, which we then take the expectation over all possible resulting directed subgraphs that consist of $n$ respondents. 
\end{corollary}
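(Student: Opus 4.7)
The plan is to invoke the tower property of conditional expectation, observing that the randomness in the problem decomposes cleanly into two independent sources: (i) the random draw of the $n$ respondents from the population graph $G^d$, which induces the random directed subgraph $\Tilde{G}^d$, and (ii) the i.i.d. initial beliefs $X_1, \ldots, X_n$ drawn from $\mathcal{F}^\star$ together with the resulting updated beliefs obtained via the average interaction rule on $\Tilde{G}^d$. Since the sampling mechanism is performed without reference to the respondents' private opinions, these two sources are independent.

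First I would formalize the setup by letting $\mathcal{S}$ denote the (random) set of $n$ sampled respondents, which determines $\Tilde{G}^d$ through the restriction of $G^d$ to $\mathcal{S}$, including edge directions. Conditional on $\Tilde{G}^d$, the out-degrees $d_1^o, \ldots, d_n^o$ of the sampled respondents inside $\Tilde{G}^d$ are fixed quantities, and Assumption \ref{assume:fewdirecteddegree} applies to $\Tilde{G}^d$. Consequently, Corollary \ref{cor: directednetwork} yields a deterministic (in $\Tilde{G}^d$) upper bound on $\E{W_1(\hat{F}_n,F)\mid \Tilde{G}^d}$, the expectation being taken solely over the initial beliefs.

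Next I would apply the law of iterated expectation,
\begin{equation*}
\E{W_1(\hat{F}_n,F)} \;=\; \E{\,\E{W_1(\hat{F}_n,F)\mid \Tilde{G}^d}\,},
\end{equation*}
where the outer expectation is taken over all realizations of $\Tilde{G}^d$ arising from uniformly random selection of $n$ vertices in $G^d$. The validity of this step reduces to the integrability of $W_1(\hat{F}_n,F)$, which follows from the fact that the inner conditional bound in Corollary \ref{cor: directednetwork} is finite for every realization of $\Tilde{G}^d$ (the out-degrees are bounded by $n-1$, and $\mathcal{F}^\star$ has finite variance $\sigma^2$). Substituting the explicit upper bound from Corollary \ref{cor: directednetwork} for the inner expectation then yields the stated identity, with the outer expectation averaging the degree-dependent terms such as $\frac{1}{n}\sum_{i=1}^n \frac{1}{\sqrt{d_i^o + 1}}$ and $\langle d^o \rangle$ over the distribution of $\Tilde{G}^d$.

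The main obstacle, if any, is not technical but expository: making sure the conditioning on $\Tilde{G}^d$ correctly captures the joint distribution of the directed adjacency structure and the out-degree sequence $(d_i^o)_{i=1}^n$ that appears inside Corollary \ref{cor: directednetwork}, and explaining that the outer expectation depends on the population size $|P|$, sample size $n$, and the out-degree distribution of $G^d$. As in the undirected case (Corollary \ref{cor:randomsamplinggeneral}), writing the outer expectation in closed form is cumbersome and unenlightening, so I would, in parallel with the undirected treatment, defer an explicit expansion to an appendix analogous to Appendix \ref{appendix: expansionofcor}.
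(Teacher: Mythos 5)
Your proposal is correct and follows essentially the same route as the paper: the identity is just the law of iterated (total) expectation, exploiting that the random sampling of respondents (which determines $\Tilde{G}^d$ and hence the out-degree sequence) is independent of the i.i.d.\ initial beliefs, with the inner conditional expectation bounded by Corollary \ref{cor: directednetwork} and the explicit outer expectation deferred, exactly as the paper does for the undirected analogue in Appendix \ref{appendix: expansionofcor}. No gaps.
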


Although the main difference in the derivation of Corollary \ref{cor: directednetwork} and Proposition \ref{prop:randomsampling} lies in the directionality of the network, we are still able to distinguish the two main effects: the interaction and sample size effects in the respective terms of Corollary \ref{cor: directednetwork} from similar terms. As the bounds depend on the out-degree of every node, ``highly-influential" respondents (i.e., those with high out-degree) have a stronger ability to shape policy beliefs. On the other hand, respondents with low out-degree contribute less to the interaction effect. 

\textbf{Weighted Interaction Rule}. Our analysis thus far assumed that every respondent listens to each of their neighbor to the same extent and updates one’s belief by taking a simple average. However, respondents can have a preference for who they would listen to and update their beliefs differently. To model this, we allow the respondents to place different weights on his or her own and their neighbors’ beliefs. As such, the sampled respondents follow the weighted average interaction rule. Suppose that the policymakers adopt the random sampling strategy and sample $n$ respondents from the population uniformly at random. We denote $a_{ij}$ to be the normalized tendency for respondent $i$ to listen to respondent $j$ after they interact, while satisfying the condition that $\sum_j a_{ij} = 1$ for all $i \in \{1,2,\ldots,n\}$ and $\mathcal{C}_i$ to be weighted sample mean distribution based on $a_{ij}$, with $C_i$ being its cdf. We derive several analogous results to Proposition \ref{prop:randomsampling} and Corollary \ref{cor:randomsamplinggeneral}.

\begin{corollary}\label{cor: generalinteraction}
Given the resulting subgraph $\Tilde{G}$ obtained by randomly selecting $n$ respondents. Suppose Assumption \ref{assume:fewdegree} holds. Then, the expected $W_1$ distance between the ecdf of the observed belief distribution $\hat{F}_n$ and the cdf of the initial belief distribution $F$ has the following upper bound,
\begin{equation*}
\resizebox{\hsize}{!}{$
\begin{aligned}
\textstyle \E{W_1(\hat{F}_n,F)}
&\textstyle \leq \frac{1}{n}\int_{\R}\sqrt{\sum_{i=1}^nC_i(t)(1-C_i(t)) } dt\\
&\textstyle \quad + \sigma\sqrt{2(1-\rho^{F,\Tilde{F}})} + \sigma \\
& \textstyle \quad + \frac{\sigma}{n}\sum_{i=1}^n  \sqrt{\sum_{j \in N^d_i \cup i} a_{ij}^2}\left(\sqrt{2(1-\rho^{C_i,\Tilde{F}_i})} -1\right) \\
&\textstyle \quad + \frac{O(\langle d \rangle)}{\sqrt{n}},
\end{aligned}
$}
\end{equation*}
\looseness=-1
where $\rho^{F,\Tilde{F}}$ and $\rho^{C_i,\Tilde{F}_i}$ are the respective Pearson correlation of the points in the quantile-quantile plot of $F$ and $\Tilde{F}$, and $C_i$ and $\Tilde{F}_i$ for $i \in [n]$ and $\langle d \rangle$ is the average degree of $\Tilde{G}$.
\end{corollary}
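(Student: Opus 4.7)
The plan is to mirror the architecture of the proof of Proposition \ref{prop:randomsampling}, substituting the equal-weight sample-mean distribution $\mathcal{F}^{d_i+1}$ by the weighted sample-mean distribution $\mathcal{C}_i$ induced by the coefficients $\{a_{ij}\}$. Under the weighted interaction rule, $X'_i = \sum_{j} a_{ij} X_j$ has mean $\mu$ (since $\sum_{j} a_{ij} = 1$) and variance $\sigma^2 \sum_{j \in N^d_i \cup i} a_{ij}^2$, and the indicator $Y_i(t) = \mathds{1}(X'_i \leq t)$ is Bernoulli with parameter $C_i(t)$. Setting $F_n = \frac{1}{n}\sum_i C_i$ so that $\E{\hat F_n(t)} = F_n(t)$, I first apply the triangle inequality
$$\E{W_1(\hat F_n, F)} \leq \E{W_1(\hat F_n, F_n)} + W_1(F_n, F),$$
and handle each term in turn.

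For the first term, I repeat the Fubini/Jensen step of Proposition \ref{prop: indpsample} to obtain $\E{W_1(\hat F_n, F_n)} \leq \int_{\R}\sqrt{\mathrm{Var}(\hat F_n(t))}\, dt$. I then decompose $\mathrm{Var}(\hat F_n(t))$ into diagonal variances, which sum to $\frac{1}{n^2}\sum_i C_i(t)(1 - C_i(t))$, and off-diagonal covariances $\mathrm{Cov}(Y_r(t), Y_s(t))$. This covariance vanishes unless the weighted sums $X'_r$ and $X'_s$ share at least one source $X_j$ with $a_{rj} a_{sj} > 0$, which happens exactly when $r$ and $s$ either share an edge or share a common neighbor in $\Tilde{G}$, i.e., whenever $M_{r,s} = 1$. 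Bounding each nonzero covariance by Cauchy--Schwarz and counting such pairs via Assumption \ref{assume:fewdegree} reproduces, as in Proposition \ref{prop:randomsampling}, the first integral term of the bound together with the $O(\langle d \rangle)/\sqrt{n}$ remainder.

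For the deviation term $W_1(F_n, F)$, I introduce the Gaussian surrogates $\Tilde{F}_i = N(\mu, \sigma^2 \sum_j a_{ij}^2)$ and $\Tilde{F} = N(\mu, \sigma^2)$, and apply the triangle inequality
$$W_1(F_n, F) \leq \frac{1}{n}\sum_{i=1}^n \bigl[ W_1(C_i, \Tilde{F}_i) + W_1(\Tilde{F}_i, \Tilde{F}) + W_1(\Tilde{F}, F) \bigr].$$
Using $W_1 \leq W_2$ on each term and invoking Theorem \ref{thm:2wasserstein} together with the closed-form Gaussian identity \eqref{eqn:2wassersteinnormal} gives $W_2(C_i, \Tilde{F}_i) = \sigma\sqrt{\sum_j a_{ij}^2}\sqrt{2(1-\rho^{C_i,\Tilde{F}_i})}$ (since $C_i$ and $\Tilde{F}_i$ agree in mean and variance), $W_2(\Tilde{F}_i, \Tilde{F}) = \sigma\bigl(1 - \sqrt{\sum_j a_{ij}^2}\bigr)$ (using $\sum_j a_{ij}^2 \leq 1$, a consequence of Cauchy--Schwarz applied to $\sum_j a_{ij} = 1$), and $W_2(\Tilde{F}, F) = \sigma\sqrt{2(1-\rho^{F,\Tilde{F}})}$. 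Averaging the first two over $i$ collapses to
$$\sigma + \frac{\sigma}{n}\sum_{i=1}^n \sqrt{\textstyle\sum_{j \in N^d_i \cup i} a_{ij}^2}\bigl(\sqrt{2(1-\rho^{C_i,\Tilde{F}_i})} - 1\bigr),$$
which matches the third, fourth, and fifth terms of the stated bound.

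The main subtlety is the covariance bookkeeping in the second paragraph: I need to verify that the long-range indicator $M_{r,s}$, originally constructed from $A + A^2$ to detect pairs at distance at most two under the uniform rule, still captures every correlated pair under a general non-uniform weighting. This holds because correlation of $Y_r(t)$ and $Y_s(t)$ is driven purely by shared source variables $X_j$, and the support of each row $a_{i \cdot}$ is $N_i^d \cup \{i\}$, so no correlated pair escapes the previous counting. Consequently, Assumption \ref{assume:fewdegree} applies verbatim (with the bound constant $C_t$ now taken as $\max_i\{C_i(t)(1-C_i(t))\}$), and combining the variance and deviation contributions yields the corollary.
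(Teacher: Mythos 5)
Your proposal is correct and follows exactly the derivation the paper intends: the paper states this corollary without an explicit proof, noting only that it is obtained analogously to Proposition \ref{prop:randomsampling} with the weighted sample-mean distribution $\mathcal{C}_i$ replacing $\mathcal{F}^{d_i+1}$, and your argument faithfully fills in those details (the Bernoulli parameter $C_i(t)$, the covariance support argument showing $M_{r,s}$ still captures all correlated pairs, and the Gaussian-surrogate computation using $\mathrm{Var}(X_i') = \sigma^2\sum_j a_{ij}^2 \leq \sigma^2$). The only point worth flagging is that your verification that the long-range indicator from Assumption \ref{assume:fewdegree} remains valid under non-uniform weights is a detail the paper glosses over, and it is a welcome addition rather than a deviation.
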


\begin{corollary}
For a given population graph $G$, the expected $W_1$ distance between the ecdf $\hat{F}_n$ and the cdf of the initial belief distribution $F$ is given as
$$\mathbb{E}[W_1(\hat{F}_n,F)] =  \E{\E{W_1(\hat{F}_n,F)|\Tilde{G}}},$$
where the inner expectation is from Corollary \ref{cor: generalinteraction}, which we then take the expectation over all possible resulting subgraphs that consist of $n$ respondents. 
\end{corollary}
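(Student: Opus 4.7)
The plan is to invoke the tower property of conditional expectation, using the fact that there are two independent sources of randomness in the random sampling setup: (i) the random choice of which $n$ respondents are selected, which in turn determines the induced subgraph $\Tilde{G}$; and (ii) the initial beliefs $X_1, \ldots, X_n$ drawn i.i.d.\ from $\mathcal{F}^\star$. Corollary \ref{cor: generalinteraction} already provides an upper bound on $\E{W_1(\hat{F}_n, F) \mid \Tilde{G}}$ for a \emph{fixed} realization of the induced subgraph (where the expectation is over source (ii) only). The remaining step is to average this bound over the distribution of $\Tilde{G}$ induced by the random sampling.

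First, I would note that since the initial beliefs $\{X_i\}$ are drawn independently of the sampling procedure, we can apply the tower property to write
\[
\E{W_1(\hat{F}_n, F)} \;=\; \E{\E{W_1(\hat{F}_n, F) \mid \Tilde{G}}}.
\]
Here, the inner expectation is taken over the i.i.d.\ initial beliefs drawn from $\mathcal{F}^\star$ with the induced subgraph held fixed, while the outer expectation is taken over all possible subgraphs $\Tilde{G}$ that can arise from selecting $n$ respondents uniformly at random from $G$. Since the bound in Corollary \ref{cor: generalinteraction} holds pointwise for every realization of $\Tilde{G}$ satisfying Assumption \ref{assume:fewdegree}, we can substitute that bound into the inner expectation and then take the outer expectation over $\Tilde{G}$.

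I do not anticipate any substantive obstacle here: the corollary is essentially a bookkeeping statement that lifts Corollary \ref{cor: generalinteraction} (conditional on a fixed induced subgraph) to an unconditional statement over the joint randomness of sampling and initial beliefs. The only minor point worth highlighting is that the quantities appearing in the inner bound, such as the per-respondent weights $a_{ij}$, the degrees, and the average degree $\langle d \rangle$ of $\Tilde{G}$, as well as the sample-mean cdfs $C_i$, are themselves functions of $\Tilde{G}$ and therefore random when viewed from the outer expectation. Consequently, the outer averaging ultimately concentrates these quantities around their expected values determined by the population graph $G$, the population size $|P|$, the sample size $n$, and the degree sequence of $G$, exactly as remarked following Corollary \ref{cor:randomsamplinggeneral}. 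Since writing out the resulting expression explicitly would be unwieldy and provide little additional insight beyond the conditional bound, the statement is left in its compact iterated form.
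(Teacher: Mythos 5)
Your proposal is correct and matches the paper's own treatment: the paper likewise justifies this corollary purely by the law of iterated expectation, conditioning on the resulting subgraph $\Tilde{G}$ (equivalently, the set of sampled respondents) as one of two independent sources of randomness and substituting the conditional bound of Corollary \ref{cor: generalinteraction} into the inner expectation. Your closing remark that the degrees, weights $a_{ij}$, and cdfs $C_i$ are themselves random under the outer expectation is exactly the caveat the paper makes when it declines to expand the expression explicitly.
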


Although the main difference in the derivation of Corollary \ref{cor: generalinteraction} and Proposition \ref{prop:randomsampling} lies in the degree of interaction among the respondents, we are still able to distinguish the two main effects: the interaction and sample size effects in the respective terms of Corollary \ref{cor: directednetwork} from similar terms. 
Although these personalized weights $a_{ij}$ reflect a more nuanced dynamic, obtaining these values requires sophisticated empirical studies. Although several statistical and machine learning approaches can be employed to infer these weights (see Section \ref{subsec:opinion}), the resulting estimates may be sensitive to sampling bias and model misspecification. Further sensitivity analyses can be conducted to examine how variations in the modeled interaction patterns influence belief estimates, for instance, specifying plausible ranges for interaction weights to assess model uncertainty in the bounds. However, small perturbations in the weights are unlikely to cause large changes in the bound, and the fundamental tradeoff between sample size and interaction effects still holds.  

To summarize, Propositions \ref{prop: indpsample} to \ref{prop:randomsampling} provide quantitative upper bounds that highlight how social interactions may distort perceived policy beliefs, and how sample sizes and the interactions affect the estimation of the population beliefs. While the choice of sampling strategy is often constrained by practical considerations such as cost and feasibility, in many cases, some respondents likely know each other and interact among themselves, thereby causing a shift in beliefs. Complementary measures, such as effective and targeted policy messaging, can be adopted to reduce variance in the initial beliefs. In this way, respondents would share similar initial beliefs, thus reducing the deviation in belief post-interaction. As a result, policymakers can better estimate the public sentiment and the policy effectiveness.


\section{Numerical Experiments}\label{sec:numericalexperiments}
In this section, we conduct a series of numerical experiments to compute and analyze the $W_1$ distances over a range of sampling strategies, network topologies and degrees of interaction. These experiments aim to (i) validate the theoretical bounds established in the earlier analysis, and (ii) investigate how opinion among sampled respondents changes across various networks and interaction contexts.

\subsection{Validating Theoretical Bounds}
In this subsection, we compare the mean $W_1$ distance between the initial and updated belief distributions under different sampling strategies, to their corresponding theoretical bounds.  The purpose of these experiments is to investigate the strength of the bounds in the earlier analyses.

We examine two different types of belief distributions. First, we consider initial belief distributions with different variances. Specifically, we investigate distributions whose density is concentrated about a small interval. This mimics situations where most of the population shares similar opinions, while a few respondents take on extreme beliefs about a policy. We model the initial beliefs using two beta distributions, namely Beta(2, 2) and Beta(2, 5) distributions. 
Second, we consider initial belief distributions with unbounded support, for the sake of \emph{generality}, as our analysis allows the belief values to take up any real number. We use the standard normal distribution to model the initial beliefs.

\looseness=-1
For this analysis, we employ two sampling strategies, specifically the independent set sampling and cluster (clique) sampling strategies, as they represent two extremes: the independent set contains no edges, whereas cliques are densely connected with many edges. For the independent set sampling strategy, we sample up to 200 respondents, whereas for the cluster sampling strategy, we sample 200 respondents with varying clique sizes. To compute the empirical mean $W_1$ distance under different initial belief distributions and sampling strategies, we run the numerical experiments 500 times. In the following, we present the results for independent set and cluster sampling strategies in Figures \ref{fig:indpsetbound} and \ref{fig:clusterbound}, respectively. 

\begin{figure}
    \centering
    \includegraphics[width=\linewidth]{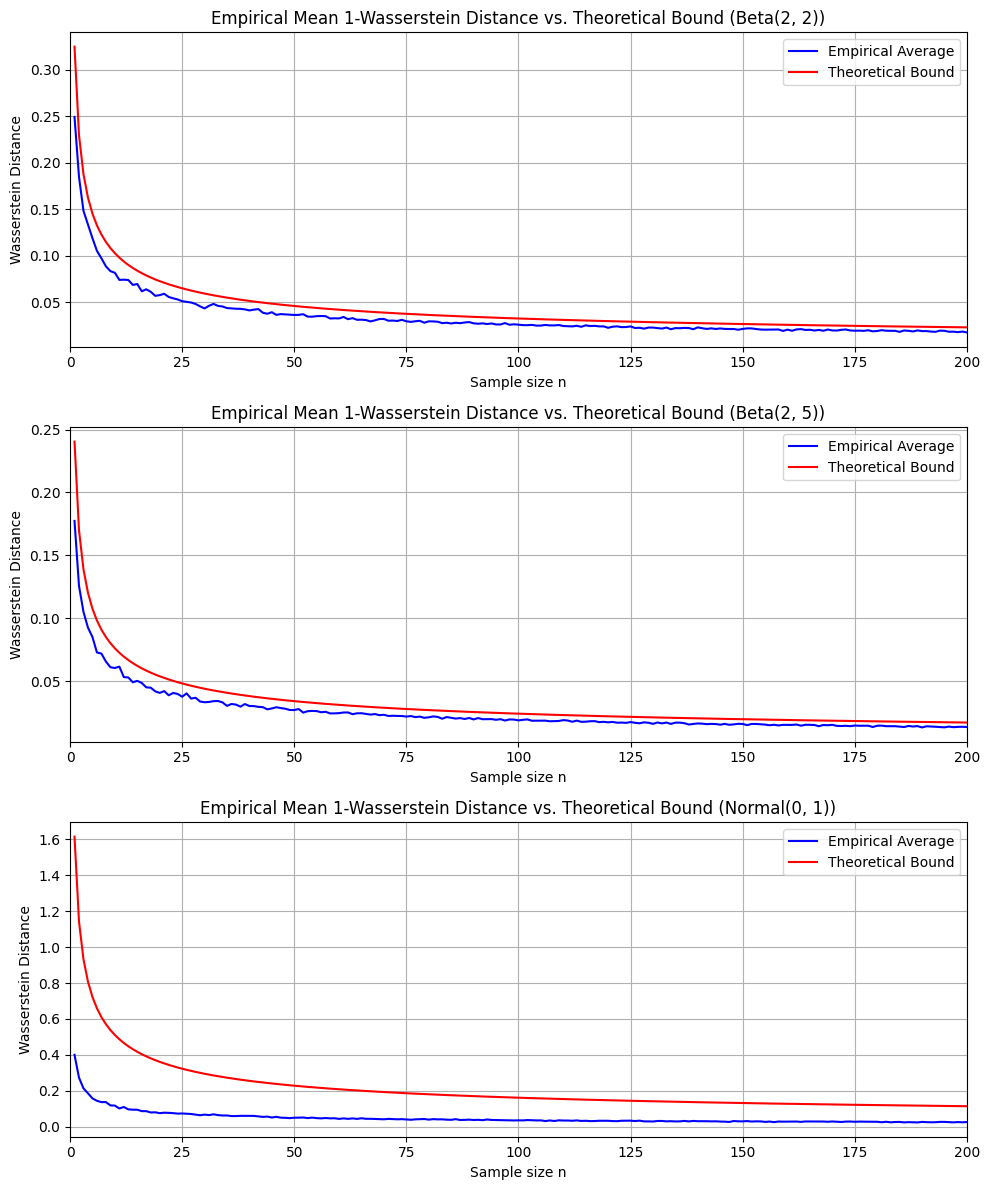}
    \caption{Theoretical bounds and empirical mean $W_1$ distances for independent set sampling strategy}
    \label{fig:indpsetbound}
    \vspace{-10pt}
\end{figure}
\looseness=-1
\textbf{Independent Set Sampling}. Figure \ref{fig:indpsetbound} validates the derived upper bound in Proposition \ref{prop: indpsample} as we observe both the theoretical upper bound and mean $W_1$ distance follow a similar decay rate. This highlights the fact that as the sample size increases, policymakers obtain more accurate estimates, and the mean $W_1$ distance decreases. We also observe that the theoretical upper bounds of the beta distributions seem to be less conservative than those of the standard normal distribution, especially at small sample sizes. This may be attributed to the bounded support, which limits the extent of the deviation in beliefs, resulting in a smaller mean $W_1$ distance.

\begin{figure}
    \centering
    \includegraphics[width=\linewidth]{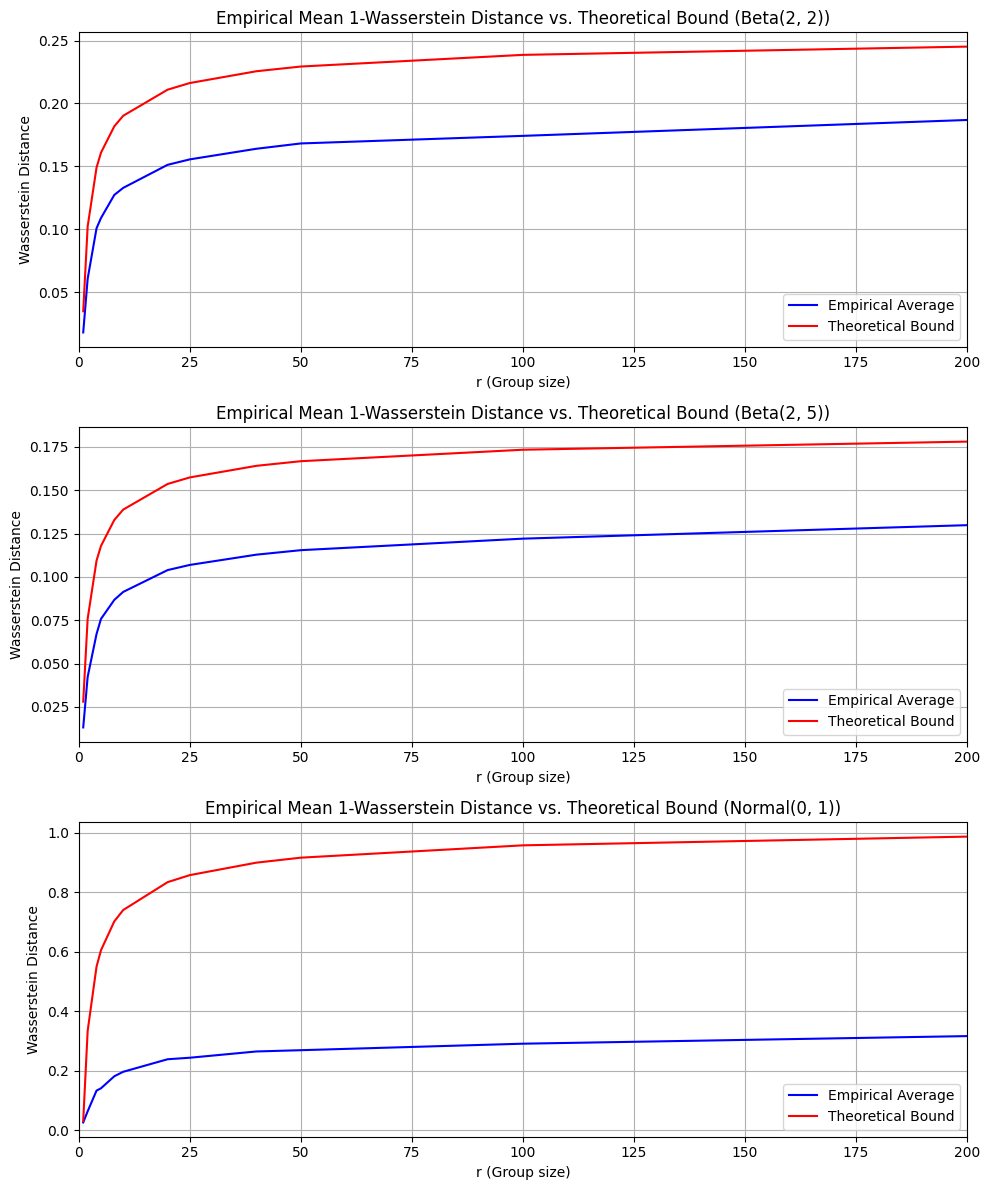}
    \caption{Smoothed theoretical bounds and empirical mean $W_1$ distances for cluster sampling strategy}
    \label{fig:clusterbound}
    \vspace{-10pt}
\end{figure}

\textbf{Cluster/Clique Sampling}. Unlike the independent set sampling strategy, the bounds are more conservative under the clique sampling strategy, especially at larger clique sizes. This is likely due to the stronger influence of interaction effects. These effects are inherently harder to quantify, which are reflected by the approximations made in the earlier theoretical analysis, thus leading to more conservative upper bounds. Nonetheless, the curves of the theoretical upper bound and the empirical mean $W_1$ distance have a similar structure, where they increase at a decreasing rate against the clique size.

\subsection{Experiments Using Synthetic and Real-World Networks}
Next, we investigate how different population networks and interaction mechanisms impact opinions.  We conduct our experiments over four different networks: (i) a synthetically generated population network based on the Erd\"{o}s-Renyi (E-R) model \cite{erdHos1960evolution}, (ii) a synthetically generated population network based on the scale-free (S-F) model \cite{goh2001universal}, and (iii) two different real-world networks obtained from the Stanford Network Analysis Platform (SNAP) \cite{snapnets}.  We also adopt the weighted interaction rule as our model of how neighbors' beliefs influence respondents (see Section \ref{sec:problemformulation}).  We generate the weights randomly; for each edge in the graph, we draw a random variable from the uniform distribution over $[0,1]$, and we subsequently normalize the entries so that the sum of the weights (i.e., the tendencies for a respondent to listen to each of his/her neighbor) equals one.

We employ the following three sampling strategies. For the independent set sampling strategy, we generate an independent set of size 10\% of the population using Algorithm \ref{alg: indpset}. For the cluster sampling strategy, we first detect the clusters using the Leiden algorithm \cite{traag2019louvain}, one that has been used in recent studies \cite{austin2024uncovering,kim2021community}. We randomly select complete or partial clusters, identified by the Leiden algorithm, where the total cluster sample size is 10\% of the population. For the random sampling strategy, we randomly select 10\% of the population. We note that the choice of sampling proportion of 10\%, as mentioned in \cite{united2008designing}, is purely illustrative, and policymakers can adjust the size based on their domain knowledge, constraints, or any available contextual data. By setting the total population to 500, we conduct a series of numerical experiments with different parameters as listed in the following table. We generate an empirical distribution of the $W_1$ distance between the initial and updated belief distributions across multiple settings by running the numerical experiments 500 times. We show the empirical distributions as a boxplot in the respective figures, some of which are in Appendix \ref{appendix:statresults}.
\begin{table}[h]
    \centering
    \begin{tabular}{|c|c|c|c|}
    \hline
    Boxplots & \makecell{Initial Belief \\ Distribution} & \makecell{Interaction \\ Rule} \\
    \hline
    Figures \ref{fig:ergraphbeta22}, \ref{fig:fb22}, \ref{fig:SF22},  \ref{fig:email22} & Beta(2,2) & Average\\
    \hline
    Figures \ref{fig:ergraphbeta22weighted}, \ref{fig:fb22weighted}, \ref{fig:SF22weighted}, \ref{fig:email22weighted} & Beta(2,2) & Weighted\\
    \hline
    Figures \ref{fig:ergraphbeta25},  \ref{fig:fb25}, \ref{fig:SF25}, \ref{fig:email25} & Beta(2,5) & Average\\
    \hline
    Figures \ref{fig:ergraphbeta25weightedaverage}, \ref{fig:fb25weighted}, \ref{fig:SF25weighted}, \ref{fig:email25weighted} & Beta(2,5) & Weighted\\
    \hline
    Figures \ref{fig:ergraphstandardnormal}, \ref{fig:fbstandardnormal}, \ref{fig:SFstandardnormal}, \ref{fig:emailstandardnormal} & Normal(0,1) & Average\\
    \hline
    Figures \ref{fig:ergraphstandardnormalweighted}, \ref{fig:fbstandardnormalweighted}, \ref{fig:SFstandardnormalweighted}, \ref{fig:emailstandardnormalweighted} & Normal(0,1) & Weighted\\
    \hline
    \end{tabular}
    \caption{Different parameters on initial belief distribution and interaction mechanism over various graphs}
    \label{tab: diffparameters}
    \vspace{-15pt}
\end{table}

\textbf{E-R Model}.
In the first set of experiments,  we consider the E-R model.  We conduct the numerical experiments across different edge probabilities, specifically, $\frac{1}{|V|^2}, \frac{1}{|V|^{1.5}}, \frac{1}{|V|}, \frac{1}{|V|^{0.5}}, \frac{1}{|V|^{0.25}}, $ $\frac{1}{|V|^{0.125}}$, where $|V| = 500$.  As a reference, edges start to emerge at the threshold of $\frac{1}{|V|^2}$, while a giant component emerges at the threshold of $\frac{1}{|V|}$, in the limit \cite{jackson2008social}.
We note that as the E-R network becomes denser, the size of each cluster becomes larger and the size of the independent set sample inevitably decreases, as observed in Theorem \ref{thm: upperboundindpset}. We provide the number of selected clusters and the average size of the independent sets across varying densities of the population network in Table \ref{tab:descriptive} in \ref{appendix:statresults}. We provide the boxplots of the $W_1$ distance for each of the six probabilities in the following figures, as well as several statistical results such as the mean $W_1$ distance and the Kolmogorov-Smirnov (K-S) test statistics in Tables \ref{tab:avdister1/n2} to \ref{tab:K-Sstandardnormal}, in Appendix \ref{appendix:statresults}, respectively.
\begin{figure}[H]
    \centering
    \includegraphics[width=\linewidth]{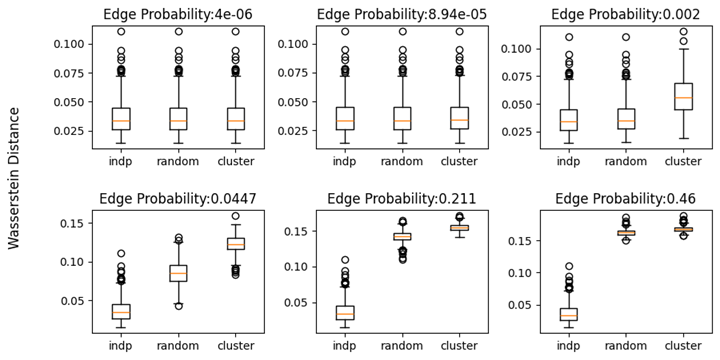}
    \caption{Initial distribution: Beta(2,2), Average interaction rule under different edge probabilities of the E-R model}\label{fig:ergraphbeta22}
    \vspace{-15pt}
\end{figure}
\begin{figure}[H]
    \centering
    \includegraphics[width=\linewidth]{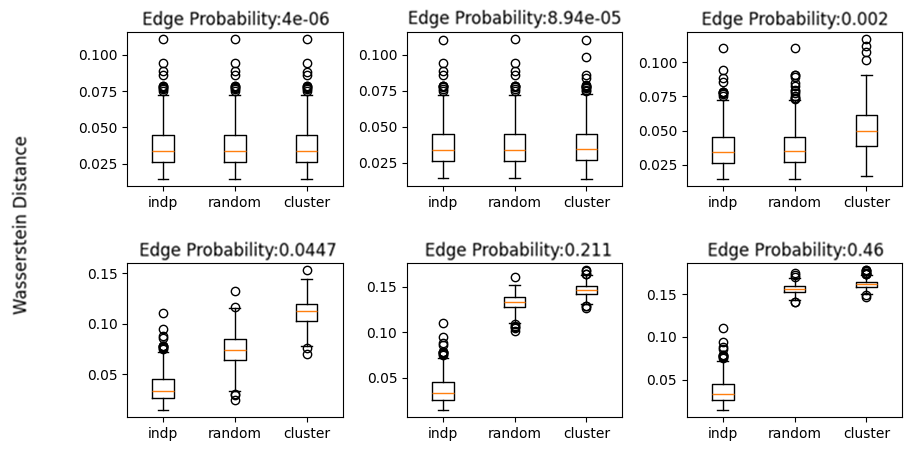}
    \caption{Initial distribution: Beta(2,2), Weighted interaction rule under different edge probabilities of the E-R model}\label{fig:ergraphbeta22weighted}
    \vspace{-18pt}
\end{figure}
\begin{figure}[H]
    \centering
    \includegraphics[width=\linewidth]{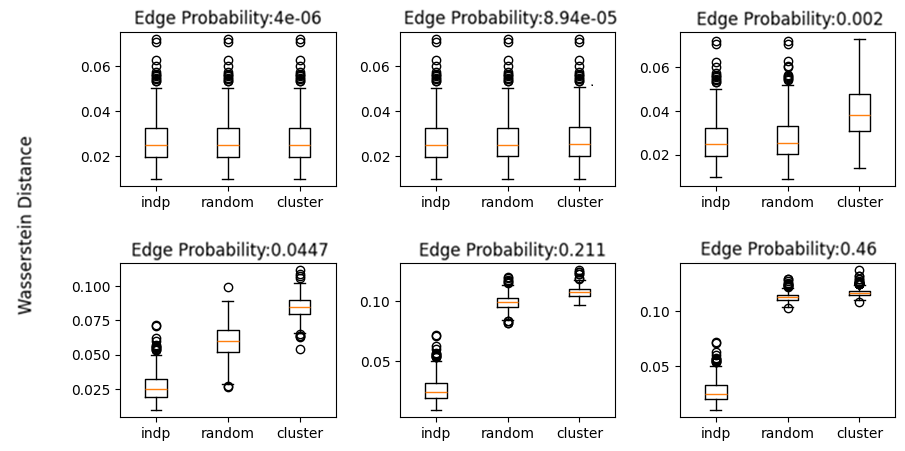}
    \caption{Initial distribution: Beta(2,5), Average interaction rule under different edge probabilities of the E-R model}\label{fig:ergraphbeta25}
    \vspace{-15pt}
\end{figure}
\begin{figure}[H]
    \centering
    \includegraphics[width=\linewidth]{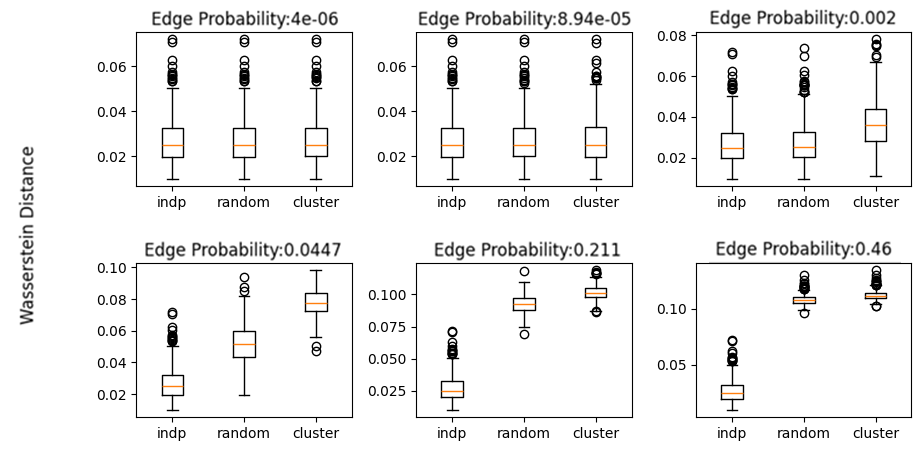}
    \caption{Initial distribution: Beta(2,5), Weighted interaction rule under different edge probabilities of the E-R model}\label{fig:ergraphbeta25weightedaverage}
    \vspace{-18pt}
\end{figure}
\begin{figure}[H]
    \centering
    \includegraphics[width=\linewidth]{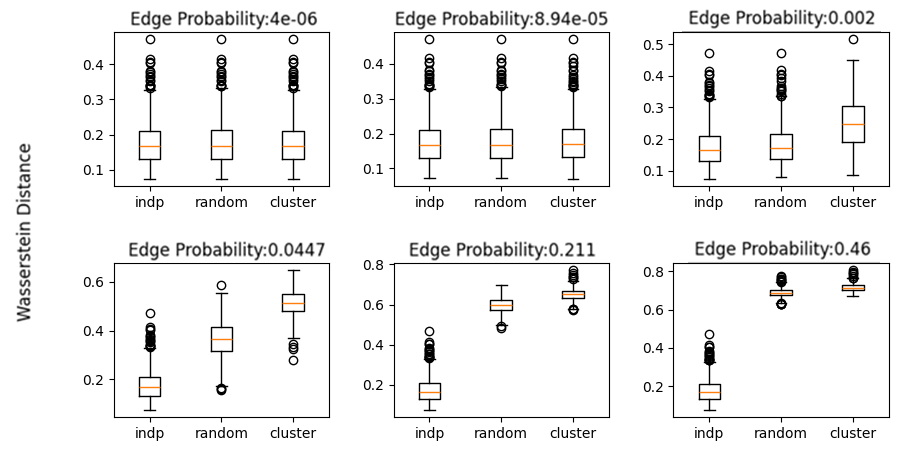}
    \caption{Initial distribution: N(0,1), Average interaction rule under different edge probabilities of the E-R model}\label{fig:ergraphstandardnormal}
    \vspace{-15pt}
\end{figure}
\begin{figure}[H]
    \centering
    \includegraphics[width=\linewidth]{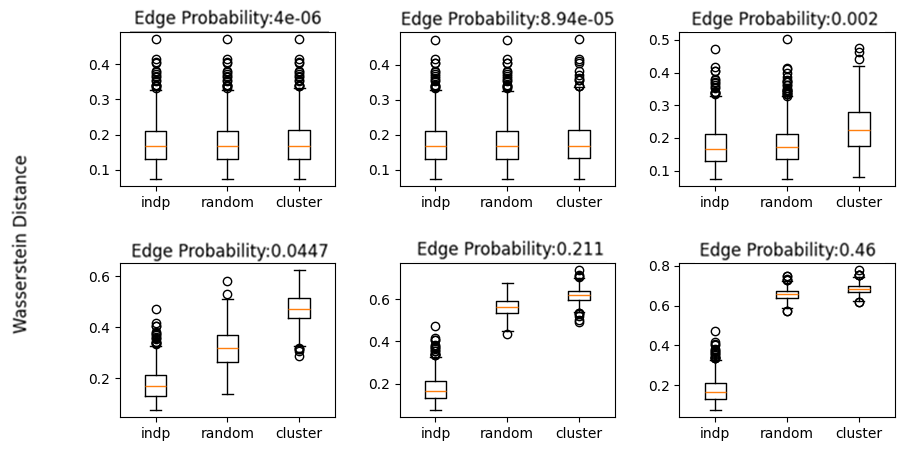}
    \caption{Initial distribution: N(0,1), Weighted interaction rule under different edge probabilities of the E-R model}\label{fig:ergraphstandardnormalweighted}
    \vspace{-12pt}
\end{figure}

Our first observation is that the sample size effect dominates when the graph is sparse, whereas the interaction effect dominates when the graph is dense.  The main implication of our observation is, if the population network is extremely sparse and there is minimal interaction effect, a random sampling strategy is preferred, as policymakers need a substantial independent set size to achieve a similar estimation performance, which can be difficult to obtain due to resources and feasibility constraints. This insight follows from the findings that when the edge probabilities are small, all three sampling strategies have a small mean $W_1$ distance. When the network is sparse, the resulting subgraphs after sampling have few edges. Then, the random and cluster samples largely consist of isolated vertices, and there is minimal interaction effect. Hence, the random and cluster samples perform almost as well as an independent set sample, and all three sampling strategies accurately estimate the initial belief distribution with small error. As the network becomes denser, the size of the independent set sample decreases. Thus, this affects the estimation of the initial belief distribution, as reflected in the increase of the mean $W_1$ distance. The random and cluster sampling strategies experience a larger increase in the mean $W_1$ distance as there are more interactions among the sampled respondents, arising from a denser subgraph. 

Our second observation is that the network density may reduce variability in the $W_1$ distance. 
Specifically, we note that the variance of the empirical $W_1$ distance distribution of the random and cluster samples decreases as the edge probabilities increase. We postulate that, as respondents have more neighbors, most tend to reach a consensus after interacting and share similar beliefs across the runs, thus contributing to similar empirical $W_1$ distances.

A closely related point is that the interaction mechanism under a dense network may affect the shift in beliefs.
Specifically, we observe the slightly larger mean $W_1$ distance for the random and cluster sampling strategies under the average interaction rule, as compared to those of the weighted counterparts (see Tables \ref{tab:avdister1/n2} to \ref{tab:avdister1/n0.125} in Appendix \ref{appendix:statresults}). These differences in the empirical $W_1$ distance distributions are also significant at the 1\% level of significance (see Tables \ref{tab:K-Sbeta2,2} to \ref{tab:K-Sstandardnormal} in Appendix \ref{appendix:statresults}). As the network density increases, respondents tend to have more neighbors and obtain similar beliefs after interacting under the average interaction rule. However, the information about the differences between their initial and updated opinions is lost when the respondents reach a consensus, thus resulting in a larger deviation in beliefs and $W_1$ distance. Unlike the average interaction rule, respondents have disproportionate influence on each other under the weighted interaction rule, and they are less likely to reach a consensus. This could then explain the smaller mean $W_1$ distance observed.

Our third observation is that the variance in the initial belief distribution (i.e., the diversity in opinions) may have an impact on how much opinion changes.  Specifically, we observe a smaller mean $W_1$ distance for the Beta(2, 5) distribution, as compared to the Beta(2, 2) distribution, across all threshold probabilities (see Tables \ref{tab:avdister1/n2} to \ref{tab:avdister1/n0.125} in Appendix \ref{appendix:statresults}).  The Beta(2, 5) distribution has a smaller variance than the Beta(2, 2) distribution. Hence, respondents’ beliefs are less likely to undergo substantial belief shifts following the interactions within themselves. Thus, there would be a smaller difference between the updated and the initial belief distributions and a smaller mean $W_1$ distance. 


\textbf{S-F Model}.
In the second set of experiments, we consider the S-F model where the generated networks have a scale-free property, i.e., the graph has a power law degree distribution. We consider a scale factor of 2.5, as many scale-free networks have scaling exponents between 2 and 3 \cite{goh2001universal,goh2002classification,chen2004modeling,barabási2016network}. We set the network size for the S-F models to 998, which is close to the expected size of the E-R graph with edge probability $\frac{1}{|V|}$, where $|V| = 500$. We note that the average number of clusters selected is about 4.8 across each setting. 
We provide several statistical results, such as the mean $W_1$ distance, the K-S test statistics and the boxplots of the empirical $W_1$ distance distributions in Tables \ref{tab:SFgraph}, \ref{tab:K-SSF} and Figure \ref{fig:overallSF}, in Appendix \ref{appendix:statresults}, respectively.

\begin{table}[H]
    \centering
    \resizebox{\columnwidth}{!}{%
    \begin{tabular}{|c|c|c|c|c|c|}
    \hline
    \multirow{2}{*}{Distribution} & \multirow{2}{*}{\makecell{Interaction \\ Rule}} & \multicolumn{2}{c|}{Random} & \multicolumn{2}{c|}{Cluster} \\
    \cline{3-6}
     & & E-R & S-F & E-R & S-F\\
    \hline
    \multirow{2}{*}{Beta(2,2)} & Average & 0.038 & 0.043 & 0.057 & 0.083\\
    \cline{2-6}
     & Weighted & 0.038 & 0.041 & 0.051 & 0.072\\
    \hline
    \multirow{2}{*}{Beta(2,5)} & Average  & 0.028 & 0.031 & 0.039 & 0.058\\
    \cline{2-6}
     & Weighted & 0.027 & 0.030 & 0.036 & 0.051\\
    \hline
    \multirow{2}{*}{Normal(0,1)} & Average  & 0.185 & 0.196 & 0.251 & 0.307\\
    \cline{2-6}
     & Weighted & 0.183 & 0.195 & 0.229 & 0.311\\
    \hline
    \end{tabular}}
    \caption{Comparing empirical mean $W_1$ distance between E-R and S-F networks}
    \label{tab:compareersf}
\end{table}

From Table \ref{tab:compareersf}, we note a larger mean $W_1$ distance for the random and cluster sampling strategies in the S-F networks as compared to the E-R networks. This is likely due to the network topology, as the degree distribution of E-R networks follows a binomial distribution, whereas S-F networks exhibit a power law degree distribution, where there are few nodes with high degree and a large number of nodes with low degree. Respondents with fewer neighbors experience a proportionately stronger influence from each individual neighbor, making their beliefs more susceptible to change. This could result in a large shift in their beliefs after interacting with neighbors, especially those who hold very different opinions, thereby resulting in a larger mean $W_1$ distance. This illustrates that the degree heterogeneity in the sample contributes to the extent of deviation in beliefs. As many real-world networks exhibit scale-free characteristics, many respondents tend to have a few neighbors, and these respondents can be sensitive to extreme opinions. Hence, policymakers need to recognize that such network topology can contribute to a large deviation in policy beliefs, which can affect the estimation of policy effectiveness.

\looseness=-1
\textbf{Undirected Real-World Network}.  In the third set of experiments, we use the \code{ego-Facebook} dataset from SNAP \cite{snapnets}, which consists of friends lists from Facebook.  The purpose of this experiment is to investigate whether our findings also apply to real-world networks.  The network contains 4039 vertices and 88234 edges, and is an undirected network (friends are mutual).  We observe that this network has \emph{small-world} properties as it has a moderately high average clustering coefficient of 0.6055 and a small average path length of 3.69. Obtaining a sizeable independent set sample is difficult, as this requires some knowledge about existing relationships among the population. Hence, we mimic the situation where policymakers do not have much resources and can only access a small independent set sample, say of size 30. We note that the average number of clusters selected is about 2.2 across each setting. We provide several statistical results, such as the mean $W_1$ distance and the K-S test statistics, in Tables \ref{tab:fbundirected} and \ref{tab:K-Sfacebook} in Appendix \ref{appendix:statresults}, respectively.

\begin{figure}[h]
     \centering
     \begin{subfigure}[b]{0.22\textwidth}
         \centering
         \includegraphics[width=\textwidth]{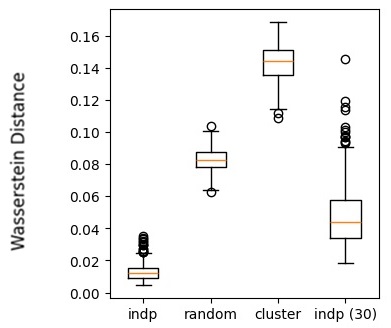}
         \caption{ }
         \label{fig:fb22}
     \end{subfigure}
     \hspace{-0.1cm}
     \begin{subfigure}[b]{0.217\textwidth}
         \centering
         \includegraphics[width=\textwidth]{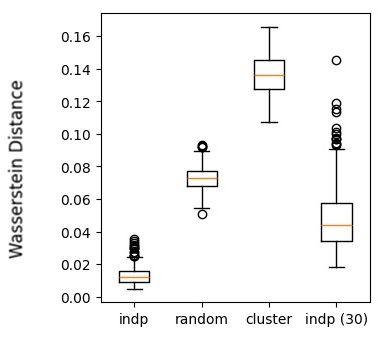}
         \caption{ }
         \label{fig:fb22weighted}
     \end{subfigure}
     \begin{subfigure}[b]{0.21\textwidth}
         \centering
         \includegraphics[width=\textwidth]{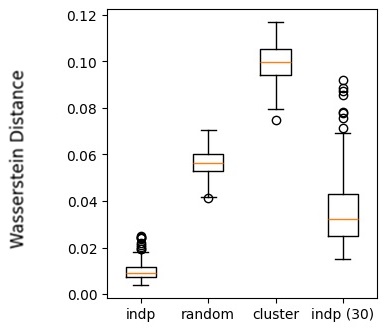}
         \caption{ }
         \label{fig:fb25}
     \end{subfigure}
     \hspace{0.1cm}
     \begin{subfigure}[b]{0.21\textwidth}
         \centering
         \includegraphics[width=\textwidth]{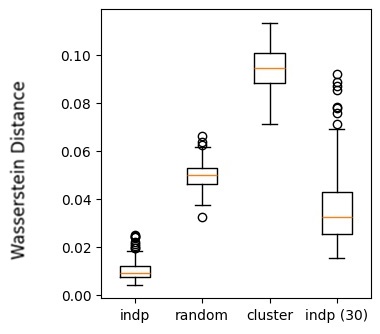}
         \caption{ }
         \label{fig:fb25weighted}
     \end{subfigure}
     \begin{subfigure}[b]{0.216\textwidth}
         \centering
         \includegraphics[width=\textwidth]{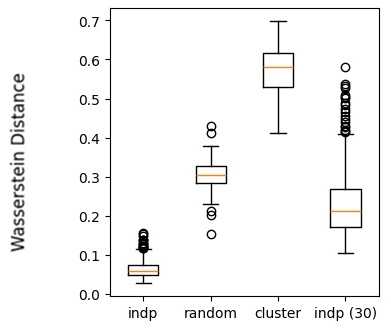}
         \caption{ }\label{fig:fbstandardnormal}
     \end{subfigure}
     \hspace{0.1cm}
     \begin{subfigure}[b]{0.21\textwidth}
         \centering
         \includegraphics[width=\textwidth]{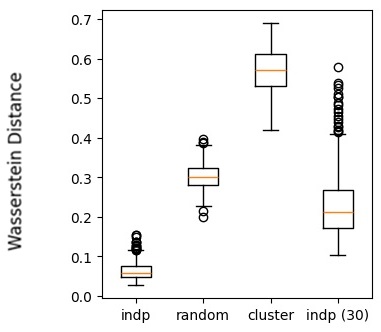}
         \caption{} \label{fig:fbstandardnormalweighted}
     \end{subfigure}
     \caption{Using \code{ego-Facebook} dataset from SNAP}        \label{fig:overallfb}
\end{figure}

From Figure \ref{fig:overallfb}, we note that a similar trend in the empirical $W_1$ distance distributions across the three sampling strategies, as compared to our previous findings from generated networks.
This validates the tradeoff between sample size and interaction effects inherent to the sampling strategies. Furthermore, from Figure \ref{fig:overallfb} and Table \ref{tab:fbundirected} in Appendix \ref{appendix:statresults}, the mean $W_1$ distance observed from the random sample is larger than that of the small independent set sample, though it has a smaller variance due to the larger sample size. This underscores a practical tradeoff in allocating resources to improve the estimation of population beliefs, where resources can be directed toward either obtaining a larger independent set sample to reduce the sample size effect or adopting alternative measures, such as enhancing policy communication, so that more people share similar beliefs, thereby reducing the interaction effect.

In addition, the moderately high clustering coefficient (0.6055) coupled with low density (about 0.01) suggests the presence of local clusters, which facilitates a significant level of interaction among respondents and the formation of consensus within their local communities. However, each of these communities could arrive at different consensuses, thereby contributing to the larger mean $W_1$ distance observed in the random and cluster samples (see Table \ref{tab:fbundirected} in Appendix \ref{appendix:statresults}).

\textbf{Directed Real-World Network}.  In the fourth set of experiments, we use the directed \code{email-Eu-core} network dataset from SNAP \cite{snapnets}.  The purpose of this experiment is to understand how directionality affects our conclusions in the context of a real network.  The network dataset is generated using email data from a large European research institution, with 81306 vertices and 1768149 edges in this directed network. Some descriptive statistics include an average clustering coefficient of 0.3994 and an average path length of 2.544 in the giant component. Similar to the previous numerical experiment, we include the case where we have a small independent set sample of size 30. We note that the average number of clusters selected is about 1.28 across each setting. We provide several statistical results, such as the mean $W_1$ distance, the K-S test statistics and the boxplots of the empirical $W_1$ distance distributions in Tables \ref{tab:emaildirected}, \ref{tab:K-Sdirectedemail} and Figure \ref{fig:overallemail}, in Appendix \ref{appendix:statresults}, respectively.

To investigate how directionality affects the estimation performance, we compare an undirected version of the \code{email-Eu-core} network dataset against the original directed version. We generate the undirected \code{email-Eu-core} network by removing the directionality in the edges. We state the mean $W_1$ distance for the undirected network and the K-S statistics between the undirected and directed networks across different settings in Tables \ref{tab:emailundirected} and \ref{tab:kstestemail}, in Appendix \ref{appendix:statresults}, respectively.

\begin{table}[h]
    \centering
    \resizebox{\columnwidth}{!}{%
    \begin{tabular}{|c|c|c|c|c|c|}
    \hline
    \multirow{2}{*}{Distribution} & \multirow{2}{*}{\makecell{Interaction \\ Rule}} & \multicolumn{2}{c|}{Random} & \multicolumn{2}{c|}{Cluster} \\
    \cline{3-6}
     & & Directed & Undirected & Directed & Undirected\\
    \hline
    \multirow{2}{*}{Beta(2,2)} & Average & 0.084 & 0.069 & 0.126 & 0.122\\
    \cline{2-6}
     & Weighted & 0.068 & 0.065 & 0.117 & 0.116\\
    \hline
    \multirow{2}{*}{Beta(2,5)} & Average  & 0.058 & 0.047 & 0.087 & 0.084\\
    \cline{2-6}
     & Weighted & 0.047 & 0.044 & 0.081 & 0.080\\
    \hline
    \multirow{2}{*}{Normal(0,1)} & Average  & 0.287 & 0.276 & 0.496 & 0.485\\
    \cline{2-6}
     & Weighted & 0.286 & 0.275 & 0.495 & 0.481\\
    \hline
    \end{tabular}}
    \caption{Comparing mean $W_1$ distance between directed and undirected  \code{email-Eu-core} network}
    \label{tab:compareeucore}
\end{table}

From Table \ref{tab:compareeucore}, we observe that the mean $W_1$ distances for the directed version are generally smaller than the undirected counterparts, where the differences are significant at the 5\% significance level (see Table \ref{tab:emailundirected} in Appendix \ref{appendix:statresults}). Since respondents only listen to his or her influencers in their directed neighborhood, rather than engaging in bilateral interaction in undirected networks, the directed neighborhood is generally smaller than its undirected counterpart. Hence, each neighbor in the directed neighborhood has a larger influence on a respondent. This could lead to a larger shift in belief, especially when these neighbors have largely different opinions, thus resulting in a larger $W_1$ distance.
\section{Conclusion and Future Directions}


In our work, we examine how the population's opinion towards a policy changes as people interact among themselves and how these deviations influence the estimation of the initial population beliefs. We consider various factors that impact the changes in beliefs, such as sampling strategies, the underlying network topology and the nature of interaction among respondents. For instance, the (un)willingness of respondents to listen to their neighbors affects the formation of consensus, which then affects the deviation from their original belief. By quantifying these shifts analytically and numerically, we illustrate a tradeoff between two important factors: the sampling size and interaction effects. 


\looseness=-1
Through the design of interaction mechanisms and sampling strategies that model real-world policy evaluation studies, our work aims to provide a conceptual framework for policymakers to estimate the initial policy responses and quantify the deviation from the observed data, which has clear practical implications in evaluating policy effectiveness. Furthermore, as highlighted in our work, different sampling strategies introduce varying degrees of interaction between respondents. To obtain accurate estimates of the population's beliefs, policymakers not only need to adopt an appropriate sampling strategy -- though its choice is often shaped by practical constraints such as cost and feasibility -- but also consider complementary measures to manage the potential interaction effect. For example, adopting clearer and more effective policy communication could help respondents form similar beliefs, thereby reducing the deviation due to the interaction.

In the following, we discuss some assumptions and limitations of our model as well as potential future directions.  First, our model does not consider the natural tendency of respondents to change their own beliefs and incorporate external factors that can potentially shape their perspectives. Although a controlled and closed setting may be a strict assumption under various contexts, this stylized environment allows policymakers to recognize and quantify the tradeoff between the sampling size and interaction effects. 
Second, our model assumes that respondents interact over a given contact network and update their beliefs according to specified interaction parameters. As respondents assign different weights to their neighbors based on their views and characteristics, these parameters may be misspecified and further robustness analyses may be needed to calibrate them. Moreover, the model does not explicitly account for the homophilous nature of real-world social networks, such as assortativity in beliefs and characteristics. As respondents are more likely to interact with neighbors who share similar beliefs, policymakers can consider alternative models, such as the bounded confidence model, to better reflect the nature of these interactions. However, incorporating such dynamics would increase the complexity of the model, though it could reduce the magnitude of interaction effects, thus yielding tighter bounds and more accurate estimates. 
Third, our work only considers static networks to simplify the estimation of the initial belief distribution. We acknowledge that beliefs can change over time due to various reasons. Hence, another avenue for future research is to incorporate dynamic network processes such as changing levels of interaction over time or homophily, where respondents form new edges due to similar policy responses. This, in turn, influences the spread of information and attitudes within social networks. Depending on the network topology and the characteristics of the respondents, it could result in either consensus or disagreement over time \cite{acemoglu2011opinion}. With a possibly larger interaction effect, this would likely increase the empirical mean $W_1$ distance. 
Fourth, we note the difficulty of estimating the exact shape and moments of the initial belief distribution under limited data and unspecified network dynamics. An interesting future direction is to provide such structural characteristics of the belief distribution to understand the nuances in the beliefs and behaviors of the population better. Furthermore, policymakers can investigate the dynamics within the respondents' neighborhood as the network topology influences how beliefs evolve. 
Lastly, we acknowledge that, in reality, policy beliefs can spread across multiple networks, such as physical contact networks and social media platforms, with differing interaction patterns. As future work, policymakers can incorporate various offline and online networks to capture more realistic interaction dynamics and achieve more reliable estimates.

\section*{Acknowledgments}
The authors would like to thank Dr. Prasanta Bhattacharya for his comments on the manuscript.





\newpage

\appendices

\section{Proof of Claim \ref{claim:assumptionfewdegree}}\label{appendix: proofofclaim}

\begin{proof}
{\em Bounding $|N_v|$:} For a given vertex $v$, let $\mathbf{D}_v$ be the degree of $v$. Under an Erd\"{o}s-Renyi random graph model, $\mathcal{R}(|V|,p)$, $\mathbf{D}_v$ is the sum of $|V|-1$ independent Bernoulli random variables with success probability $p$. Since the degree of any vertex under $\mathcal{R}(|V|,p)$ follows the same binomial distribution, the average degree $\langle d \rangle$ of the graph is $(|V|-1)p$, which is equivalent to $\E{\mathbf{D}_v}$. Hence, the tail probability of $\mathbf{D}_v$ exceeding twice the average degree, is given as
\vspace{-5pt}
\begin{equation*}
\begin{aligned}
\textstyle \Prob{\mathbf{D}_v \geq 2\E{\mathbf{D}_v}} &= \Prob{\mathbf{D}_v - \E{\mathbf{D}_v} \geq \E{\mathbf{D}_v}}\\
& \textstyle \overset{(a)}{\leq} \exp ( {-\frac{2\E{\mathbf{D}_v}^2}{|V|-1}} )\\
&=\exp (-2p^2(|V|-1))
\end{aligned}
\end{equation*}
where (a) is by the Hoeffding's inequality. \\

\vspace{-10pt}



\vspace{-5pt}
\looseness-1
{\em Bounding $|M_v|$:} To bound the number of vertices that are exactly 2-hops away from $v$, we consider the number of 2-stars with $v$ as a leaf vertex. We denote $\mathbf{D}_{2v}$ as the random variable for the number of 2-stars with $v$ as a leaf vertex. Then, the expected number of 2-stars is given as
\vspace{-5pt}
$$\E{\mathbf{D}_{2v}} = (|V|-1)(|V|-2)p^2 \approx \langle d \rangle^2$$
The random variable $\mathbf{D}_{2v}$ is represented as $\sum_{j}\mathds{1}_{\tau_j}$, where $\tau_j$ is the event that the $j$-th triple of vertices $(v_{j1},v_{j2},v_{j3})$, form a 2-star with $v$ as a leaf vertex. Then, $\E{(\mathbf{D}_{2v})^2}$ is given as,
\vspace{-8pt}
\begin{equation*}
\begin{aligned}
\textstyle \E{(\mathbf{D}_{2v})^2} &= \textstyle \E{(\sum_j\mathds{1}_{\tau_j})^2}\\ &= \textstyle \E{\sum_j\mathds{1}_{\tau_j}} + \sum_{j \neq k}\E{\mathds{1}_{\tau_j}\mathds{1}_{\tau_k}}
\end{aligned}
\end{equation*}
As $\E{\mathds{1}_{\tau_j}\mathds{1}_{\tau_k}} = \Prob{\mathds{1}_{\tau_j}\cap \mathds{1}_{\tau_k}}$ is the probability that both triples $j$ and $k$ have a 2-star with $v$ as a leaf vertex. If $\tau_j \cap \tau_k = v$, then $\Prob{\tau_j \cap \tau_k} = p^4$; if $\tau_j$ and $\tau_k$ have 2 common vertices, including $v$ and no common edge, then $\Prob{\tau_j \cap \tau_k} = p^4$ and if $\tau_j$ and $\tau_k$ have 2 common vertices, including $v$ and 1 common edge, then $\Prob{\tau_j \cap \tau_k} = p^3$. The total number of the pairs of triples $j$ and $k$ with $v$ as a common vertex is $4\binom{|V|-1}{2}\binom{|V|-3}{2}$, with 2 common vertices including $v$ and a common edge is $2\binom{|V|-1}{2}(|V|-3)$ and with two common vertices including $v$ and no common edge is $4\binom{|V|-1}{2}(|V|-3)$. Summing, we have
\begin{equation*}
\resizebox{\hsize}{!}{$
\begin{aligned}
\sum_{j \neq k}\E{\mathds{1}_{\tau_j}\mathds{1}_{\tau_k}} & = \sum_{j \neq k}\Prob{\mathds{1}_{\tau_j}\cap \mathds{1}_{\tau_k}}\\ 
&= 4\tbinom{|V|-1}{2}\tbinom{|V|-3}{2}p^4 + 2\tbinom{|V|-1}{2}(|V|-3)p^3 \\
& \quad + 4\tbinom{|V|-1}{2}(|V|-3)p^4
\end{aligned}
$}
\end{equation*}
Using the facts that $\binom{|V|-1}{2} \sim \binom{|V|-3}{2} \sim \frac{|V|^2}{2}$ and $|V|-3 \sim |V|$, we find that $\sum_{j \neq k}\E{\mathds{1}_{\tau_j}\mathds{1}_{\tau_k}} = (1+o(1))\E{\mathbf{D}_{2v}}^2$, for any given vertex $v$. Then, the tail probability of $\mathbf{D}_{2v}$ exceeding twice its mean is given as
\vspace{-5pt}
\begin{equation*}
\begin{aligned}
\Prob{\mathbf{D}_{2v} \geq 2\E{\mathbf{D}_{2v}}} & \textstyle \overset{(a)}{\leq} \frac{Var(\mathbf{D}_{2v})}{\E{\mathbf{D}_{2v}}^2} \\
& \textstyle = \frac{\E{(\mathbf{D}_{2v})^2} - \E{\mathbf{D}_{2v}}^2}{\E{\mathbf{D}_{2v}}^2} \\
& \textstyle = \frac{\E{\mathbf{D}_{2v}} + \sum_{j \neq k}\E{\mathds{1}_{\tau_j}\mathds{1}_{\tau_k}} - \E{\mathbf{D}_{2v}}^2}{\E{\mathbf{D}_{2v}}^2}\\
& \textstyle = \frac{1}{\E{\mathbf{D}_{2v}}} + o(1)\\
& \textstyle = \frac{1}{(|V|-1)(|V|-2)p^2} + o(1)
\end{aligned}
\end{equation*}
where (a) is by the Chebyshev inequality. Combining both probabilities with the union bound and using the fact that $|V|-2 \sim |V|-1 \sim |V|$, we have 
\begin{equation*}
\resizebox{.9\hsize}{!}{$
\begin{aligned}
&\textstyle\Prob{\bigcap_{v \in V} (\mathbf{L}_v \leq 2(\langle d \rangle + \langle d \rangle^2))}\\
& \textstyle \overset{(a)}{\geq} 1- \sum_{v \in V} \Prob{\mathbf{D}_v + \mathbf{D}_{2v} \geq 2(\langle d \rangle + \langle d \rangle^2)}\\
& \textstyle \overset{(b)}{\geq} 1 - \sum_{v \in V}\Prob{\mathbf{D}_v \geq 2\langle d \rangle} - \sum_{v \in V}\Prob{\mathbf{D}_{2v} \geq 2\langle d \rangle^2}\\
& \textstyle = 1- |V|\exp (-2p^2(|V|-1)) - \frac{|V|}{(|V|-1)(|V|-2)p^2} - o(1)\\
& \textstyle \gtrsim 1 - \frac{2}{|V|p^2}
\end{aligned}
$}
\end{equation*}
where (a) and (b) are by the union bound. Then, Assumption \ref{assume:fewdegree} holds with high probability for sufficiently large $|V|$.
\end{proof}
\section{Algorithm for Sampling Independent Sets}\label{appendix: alg}
In this section, we present a pseudocode that we employ to obtain our independent set samples. We adopt an iterative approach to obtain the independent set for our numerical studies, as used in other studies \cite{blelloch2012greedy}. We provide the following definition to complement our algorithm. 

\begin{definition}
A graph $G'(V',E')$ is a \emph{subgraph} of $G(V,E)$ where $V' \subseteq V$ and $E' \subseteq E$. For a subset of nodes $V' \subseteq V$, the subgraph of $G$ induced by $V'$ is the subgraph $G'(V',E')$, where $E' = E \cap \binom{V'}{2}$. A \emph{vertex-induced subgraph} contains a subset of vertices, coupled with edges whose endpoints are both in the subset.
\end{definition}

We also add a size limit parameter to the pseudocode to provide flexibility in controlling the size of the independent set. For simulations where we do not need to constrain the size of the independent set sample, we can set $S = |V(G)|$.
\begin{algorithm}[H]
\caption{Iterative Independent Set Selection}\label{alg: indpset}
\textbf{Input: } Graph $G$, Size $S$ \\
\textbf{Output: } Independent Set $I$
\begin{algorithmic}[1]
\While{$G \neq \emptyset$}
    \State Randomly select a vertex $v \in V(G)$
    \State $I \leftarrow I \cup v$
    \State $G \leftarrow \Tilde{G}$, where $\Tilde{G}$ is the vertex-induced subgraph of $V(G) \backslash (v \cup N_v)$
    \If{$|I| > S$}
        \State break
    \EndIf
\EndWhile
\State \Return $I$
\end{algorithmic}
\end{algorithm}
\section{Expansion of Corollary \ref{cor:randomsamplinggeneral}}\label{appendix: expansionofcor}
We recall that there are two independent elements of randomness in the problem context, the sampling strategy and the initial beliefs of the respondents. We first fix the sampled respondents, thus determining the resulting subgraph $\Tilde{G}$. Then, by the law of iterated expectation, we have
$$\E{W_1(\hat{F}_n,F)} = \E{\E{W_1(\hat{F}_n,F)|\Tilde{G}}}$$
From Proposition \ref{prop:randomsampling}, we have the upper bound of $\E{W_1(\hat{F}_n,F)|\Tilde{G}}$. As the degree of the selected respondents changes across various resulting subgraphs, we need to take the expectation over all possible selections of $n$ respondents. By Fubini's theorem and Jensen's inequality, the first and third rows in the upper bound of Proposition \ref{prop:randomsampling} are given as,
\[
\resizebox{\hsize}{!}{$
\begin{aligned}
& \textstyle \E{\frac{1}{n}\int_{\R}\sqrt{\sum_{i=1}^nF^{d_i+1}(t)(1-F^{d_i+1}(t)) + \frac{\sum_{r \neq s}^nM_{r,s}}{4}} dt}\\ 
\leq & \textstyle \frac{1}{n}\int_{\R}\sqrt{\E{\sum_{i=1}^nF^{d_i+1}(t)(1-F^{d_i+1}(t))} + \frac{\E{\sum_{r \neq s}^nM_{r,s}}}{4}} dt
\end{aligned}
$}
\]

For a fixed $t$, we note that $F^{d_i+1}(t)(1-F^{d_i+1}(t))$ is a random function as it depends on the degree of the sampled respondents. By taking expectation, we have the following expression,
\begin{equation*}
\resizebox{\hsize}{!}{$
\begin{aligned}
&\textstyle \E{\sum_{i=1}^nF^{d_i+1}(t)(1-F^{d_i+1}(t))} \\
& = 
\textstyle \frac{1}{\binom{|P|}{n}}\sum_{i=1}^{|P|}\sum_{\overline{d}_i = 0}^{d_i}\binom{d_i}{\overline{d}_i}\binom{|P|-d_i-1}{n-\overline{d}_i-1}F^{\overline{d}_i+1}(t)(1-F^{\overline{d}_i+1}(t))
\end{aligned}
$}
\end{equation*}
where $d_i$ is the degree of respondent $i$ in the population.\footnote{We follow the convention that if $k < 0$ then $\binom{n}{k} = 0$} Taking reference to Assumption \ref{assume:fewdegree}, we assume that $\E{\sum_{r \neq s}^nM_{r,s}} = O(\langle d \rangle^2n)$, where $\langle d \rangle$ is the average degree of the initial population graph and remain in the same order after the sampling process. 
For the second row in the upper bound, we similarly have
\begin{equation*}
\resizebox{\hsize}{!}{$
\begin{aligned}
& \textstyle \E{\sigma\sqrt{2(1-\rho^{F,\Tilde{F}})} + \sigma + \frac{\sigma}{n}\sum_{i=1}^n \frac{\sqrt{2(1-\rho^{F^{d_i+1},\Tilde{F}^{d_i+1}})}-1}{\sqrt{d_i+1}}} \\
& \textstyle \quad =  \sigma\sqrt{2(1-\rho^{F,\Tilde{F}})} + \sigma \\ 
& \textstyle \quad + \frac{\sigma}{n\binom{|P|}{n}}\sum_{i=1}^{|P|}\sum_{\overline{d}_i = 0}^{d_i}\tbinom{d_i}{\overline{d}_i}\tbinom{|P|-d_i-1}{n-\overline{d}_i-1}\frac{\sqrt{2(1-\rho^{F^{\overline{d}_i+1},\Tilde{F}^{\overline{d}_i+1}})}-1}{\sqrt{\overline{d}_i+1}}
\end{aligned}
$}
\end{equation*}
We then have the following bound after combining both terms and splitting the square-root term.
\section{Statistical Results}\label{appendix:statresults}
\begin{table}[H]
    \centering
    \resizebox{\columnwidth}{!}{%
    \begin{tabular}{|c|c|c|c|c|}
    \hline
    Probability & Distribution & \makecell{Interaction \\ Rule} & \makecell{Average \\ Independent \\ Set Size} & \makecell{Average \\ Number of \\ Clusters} \\
    \hline
    \multirow{6}{*}{$4 \times 10^{-6}$} & \multirow{2}{*}{Beta(2,2)} & Average & 50  & 49.938 \\
    \cline{3-5}
    & & Weighted & 50  & 49.952 \\
    \cline{2-5}
     & \multirow{2}{*}{Beta(2,5)} & Average & 50  & 49.956 \\
    \cline{3-5}
    & & Weighted & 50  & 49.944 \\
    \cline{2-5}
     & \multirow{2}{*}{Normal(0,1)} & Average & 50  & 49.958 \\
    \cline{3-5}
    & & Weighted & 50  & 49.956 \\
    \hline
    \multirow{6}{*}{$8.9 \times 10^{-5}$} & \multirow{2}{*}{Beta(2,2)} & Average & 50  & 48.938 \\
    \cline{3-5}
    & & Weighted & 50  & 48.872 \\
    \cline{2-5}
     & \multirow{2}{*}{Beta(2,5)} & Average & 50  & 48.970 \\
    \cline{3-5}
    & & Weighted & 50  & 48.916 \\
    \cline{2-5}
     & \multirow{2}{*}{Normal(0,1)} & Average & 50 & 48.900 \\
    \cline{3-5}
    & & Weighted & 50  & 48.836 \\
    \hline
    \multirow{6}{*}{$0.002$} & \multirow{2}{*}{Beta(2,2)} & Average & 50  & 27.018 \\
    \cline{3-5}
    & & Weighted & 50  & 26.634 \\
    \cline{2-5}
     & \multirow{2}{*}{Beta(2,5)} & Average & 50  & 25.970 \\
    \cline{3-5}
    & & Weighted & 50  & 26.746 \\
    \cline{2-5}
     & \multirow{2}{*}{Normal(0,1)} & Average & 50  & 26.810 \\
    \cline{3-5}
    & & Weighted & 50  & 26.428 \\
    \hline
    \multirow{6}{*}{$0.045$} & \multirow{2}{*}{Beta(2,2)} & Average & 50  & 1.458 \\
    \cline{3-5}
    & & Weighted & 50  & 1.472 \\
    \cline{2-5}
     & \multirow{2}{*}{Beta(2,5)} & Average & 50  & 1.434 \\
    \cline{3-5}
    & & Weighted & 50  & 1.444 \\
    \cline{2-5}
     & \multirow{2}{*}{Normal(0,1)} & Average & 50  & 1.478 \\
    \cline{3-5}
    & & Weighted & 50  & 1.432 \\
    \hline
    \multirow{6}{*}{$0.211$} & \multirow{2}{*}{Beta(2,2)} & Average & 20.420 & 1.224 \\
    \cline{3-5}
    & & Weighted & 20.324  & 1.212 \\
    \cline{2-5}
     & \multirow{2}{*}{Beta(2,5)} & Average & 20.484  & 1.194 \\
    \cline{3-5}
    & & Weighted & 20.462  & 1.238 \\
    \cline{2-5}
     & \multirow{2}{*}{Normal(0,1)} & Average & 20.418  & 1.212 \\
    \cline{3-5}
    & & Weighted & 20.368  & 1.204 \\
    \hline
    \multirow{6}{*}{$0.460$} & \multirow{2}{*}{Beta(2,2)} & Average & 9.578  & 1.084 \\
    \cline{3-5}
    & & Weighted & 9.466  & 1.086 \\
    \cline{2-5}
     & \multirow{2}{*}{Beta(2,5)} & Average & 9.570  & 1.056 \\
    \cline{3-5}
    & & Weighted & 9.482  & 1.062 \\
    \cline{2-5}
     & \multirow{2}{*}{Normal(0,1)} & Average & 9.598  & 1.090 \\
    \cline{3-5}
    & & Weighted & 9.610  & 1.088 \\
    \hline
    \end{tabular}}
    \caption{Descriptives for E-R graphs}
    \label{tab:descriptive}
\end{table}

\begin{table}[H]
    \centering
    \resizebox{\columnwidth}{!}{%
    \begin{tabular}{|c|c|c|c|c|}
    \hline
    Distribution & Interaction Rule & Independent & Random & Cluster \\
    \hline
    \multirow{2}{*}{Beta(2,2)} & Average & 0.037 & 0.037 & 0.037 \\
    \cline{2-5}
     & Weighted & 0.037 & 0.037 & 0.037 \\
    \hline
    \multirow{2}{*}{Beta(2,5)} & Average & 0.027 & 0.027 & 0.027 \\
    \cline{2-5}
     & Weighted & 0.027 & 0.027 & 0.027 \\
    \hline
    \multirow{2}{*}{Normal(0,1)} & Average & 0.180 & 0.180 & 0.180 \\
    \cline{2-5}
     & Weighted & 0.180 & 0.180 & 0.180 \\
    \hline
    \end{tabular}}
    \caption{Mean $W_1$ distance for E-R graphs with edge probability $4 \times 10^{-6}$}
    \label{tab:avdister1/n2}
\end{table}

\begin{table}[H]
    \centering
    \resizebox{\columnwidth}{!}{%
    \begin{tabular}{|c|c|c|c|c|}
    \hline
    Distribution & Interaction Rule & Independent & Random & Cluster \\
    \hline
    \multirow{2}{*}{Beta(2,2)} & Average & 0.037 & 0.037 & 0.037 \\
    \cline{2-5}
     & Weighted & 0.037 & 0.037 & 0.037 \\
    \hline
    \multirow{2}{*}{Beta(2,5)} & Average & 0.027 & 0.027 & 0.027 \\
    \cline{2-5}
     & Weighted & 0.027 & 0.027 & 0.027 \\
    \hline
    \multirow{2}{*}{Normal(0,1)} & Average & 0.180 & 0.180 & 0.182 \\
    \cline{2-5}
     & Weighted & 0.180 & 0.180 & 0.181 \\
    \hline
    \end{tabular}}
    \caption{Mean $W_1$ distance for E-R graphs with edge probability $8.9 \times 10^{-5}$}
    \label{tab:avdister1/n1.5}
\end{table}

\begin{table}[H]
    \centering
    \resizebox{\columnwidth}{!}{%
    \begin{tabular}{|c|c|c|c|c|}
    \hline
    Distribution & Interaction Rule & Independent & Random & Cluster \\
    \hline
    \multirow{2}{*}{Beta(2,2)} & Average & 0.037 & 0.038 & 0.057 \\
    \cline{2-5}
     & Weighted & 0.037 & 0.038 & 0.051 \\
    \hline
    \multirow{2}{*}{Beta(2,5)} & Average & 0.027 & 0.028 & 0.039 \\
    \cline{2-5}
     & Weighted & 0.027 & 0.027 & 0.036 \\
    \hline
    \multirow{2}{*}{Normal(0,1)} & Average & 0.180 & 0.185 & 0.251 \\
    \cline{2-5}
     & Weighted & 0.180 & 0.183 & 0.229 \\
    \hline
    \end{tabular}}
    \caption{Mean $W_1$ distance for E-R graphs with edge probability 0.002}
    \label{tab:avdister1/n}
\end{table}

\begin{table}[H]
    \centering
    \resizebox{\columnwidth}{!}{%
    \begin{tabular}{|c|c|c|c|c|}
    \hline
    Distribution & Interaction Rule & Independent & Random & Cluster \\
    \hline
    \multirow{2}{*}{Beta(2,2)} & Average & 0.037 & 0.085 & 0.123 \\
    \cline{2-5}
     & Weighted & 0.037 & 0.075 & 0.111 \\
    \hline
    \multirow{2}{*}{Beta(2,5)} & Average & 0.027 & 0.059 & 0.084 \\
    \cline{2-5}
     & Weighted & 0.027 & 0.051 & 0.078 \\
    \hline
    \multirow{2}{*}{Normal(0,1)} & Average & 0.180 & 0.363 & 0.513 \\
    \cline{2-5}
     & Weighted & 0.180 & 0.316 & 0.472 \\
    \hline
    \end{tabular}}
    \caption{Mean $W_1$ distance for E-R graphs with edge probability 0.045}
    \label{tab:avdister1/n0.5}
\end{table}

\begin{table}[H]
    \centering
    \resizebox{\columnwidth}{!}{%
    \begin{tabular}{|c|c|c|c|c|}
    \hline
    Distribution & Interaction Rule & Independent & Random & Cluster \\
    \hline
    \multirow{2}{*}{Beta(2,2)} & Average & 0.037 & 0.143 & 0.154 \\
    \cline{2-5}
     & Weighted & 0.037 & 0.133 & 0.147 \\
    \hline
    \multirow{2}{*}{Beta(2,5)} & Average & 0.027 & 0.099 & 0.107 \\
    \cline{2-5}
     & Weighted & 0.027 & 0.092 & 0.102 \\
    \hline
    \multirow{2}{*}{Normal(0,1)} & Average & 0.180 & 0.600 & 0.651 \\
    \cline{2-5}
     & Weighted & 0.180 & 0.561 & 0.618 \\
    \hline
    \end{tabular}}
    \caption{Mean $W_1$ distance for E-R graphs with edge probability 0.211}
    \label{tab:avdister1/n0.25}
\end{table}

\begin{table}[H]
    \centering
    \resizebox{\columnwidth}{!}{%
    \begin{tabular}{|c|c|c|c|c|}
    \hline
    Distribution & Interaction Rule & Independent & Random & Cluster \\
    \hline
    \multirow{2}{*}{Beta(2,2)} & Average & 0.037 & 0.163 & 0.168 \\
    \cline{2-5}
     & Weighted & 0.037 & 0.156 & 0.161 \\
    \hline
    \multirow{2}{*}{Beta(2,5)} & Average & 0.027 & 0.113 & 0.117 \\
    \cline{2-5}
     & Weighted & 0.027 & 0.108 & 0.112 \\
    \hline
    \multirow{2}{*}{Normal(0,1)} & Average & 0.180 & 0.689 & 0.715 \\
    \cline{2-5}
     & Weighted & 0.180 & 0.656 & 0.684 \\
    \hline
    \end{tabular}}
    \caption{Mean $W_1$ distance for E-R graphs with edge probability 0.460}
    \label{tab:avdister1/n0.125}
\end{table}

\begin{table}[H]
    \centering
    \begin{tabular}{|c|c|c|}
    \hline
    Probability & Strategies & Statistic \\
    \hline
    \multirow{3}{*}{4 $\times 10^{-6}$} & Independent & 0.000 \\
    \cline{2-3}
    & Random & 0.002 \\
    \cline{2-3}
    & Cluster & 0.008 \\
    \hline
    \multirow{3}{*}{8.9 $\times 10^{-5}$} & Independent & 0.000 \\
    \cline{2-3}
    & Random & 0.010 \\
    \cline{2-3}
    & Cluster & 0.024 \\
    \hline
    \multirow{3}{*}{0.002} & Independent & 0.000 \\
    \cline{2-3}
    & Random & 0.036\\
    \cline{2-3}
    & Cluster & 0.138$^*$\\
    \hline
    \multirow{3}{*}{0.045} & Independent & 0.000\\
    \cline{2-3}
    & Random & 0.272$^*$\\
    \cline{2-3}
    & Cluster & 0.380$^*$\\
    \hline
    \multirow{3}{*}{0.211} & Independent & 0.000\\
    \cline{2-3}
    & Random & 0.464$^*$\\
    \cline{2-3}
    & Cluster & 0.492$^*$\\
    \hline
    \multirow{3}{*}{0.460} & Independent & 0.000\\
    \cline{2-3}
    & Random & 0.588$^*$\\
    \cline{2-3}
    & Cluster & 0.690$^*$\\
    \hline
    \end{tabular}
    \caption{Kolmogorov-D statistic from 2-sample K-S test for E-R graph, average v.s. weighted interaction rule under Beta(2,2) distribution, $p$-value $< 0.01$ ($^*$)}
    \label{tab:K-Sbeta2,2}
\end{table}

\begin{table}[H]
    \centering
    \begin{tabular}{|c|c|c|}
    \hline
    Probability & Strategies & Statistic \\
    \hline
    \multirow{3}{*}{4 $\times 10^{-6}$} & Independent & 0.000\\
    \cline{2-3}
    & Random & 0.002\\
    \cline{2-3}
    & Cluster & 0.010\\
    \hline
    \multirow{3}{*}{8.9 $\times 10^{-5}$} & Independent & 0.000\\
    \cline{2-3}
    & Random & 0.010\\
    \cline{2-3}
    & Cluster & 0.032\\
    \hline
    \multirow{3}{*}{0.002} & Independent & 0.000\\
    \cline{2-3}
    & Random & 0.034\\
    \cline{2-3}
    & Cluster & 0.118$^*$\\
    \hline
    \multirow{3}{*}{0.045} & Independent & 0.000\\
    \cline{2-3}
    & Random & 0.272$^*$\\
    \cline{2-3}
    & Cluster & 0.356$^*$\\
    \hline
    \multirow{3}{*}{0.211} & Independent & 0.000\\
    \cline{2-3}
    & Random & 0.400$^*$\\
    \cline{2-3}
    & Cluster & 0.452$^*$\\
    \hline
    \multirow{3}{*}{0.460} & Independent & 0.000\\
    \cline{2-3}
    & Random & 0.536$^*$\\
    \cline{2-3}
    & Cluster & 0.596$^*$\\
    \hline
    \end{tabular}
    \caption{Kolmogorov-D statistic from 2-sample K-S test for E-R graph, average v.s. weighted interaction rule under Beta(2,5) distribution, $p$-value $< 0.01$ ($^*$)}
    \label{tab:K-Sbeta2,5}
\end{table}

\begin{table}[H]
    \centering
    \begin{tabular}{|c|c|c|}
    \hline
    Probability & Strategies & Statistic \\
    \hline
    \multirow{3}{*}{4 $\times 10^{-6}$} & Independent & 0.000\\
    \cline{2-3}
    & Random & 0.002\\
    \cline{2-3}
    & Cluster & 0.006\\
    \hline
    \multirow{3}{*}{8.9 $\times 10^{-5}$} & Independent & 0.000\\
    \cline{2-3}
    & Random & 0.010\\
    \cline{2-3}
    & Cluster & 0.038\\
    \hline
    \multirow{3}{*}{0.002} & Independent & 0.000\\
    \cline{2-3}
    & Random & 0.044\\
    \cline{2-3}
    & Cluster & 0.136$^*$\\
    \hline
    \multirow{3}{*}{0.045} & Independent & 0.000\\
    \cline{2-3}
    & Random & 0.270$^*$\\
    \cline{2-3}
    & Cluster & 0.304$^*$\\
    \hline
    \multirow{3}{*}{0.211} & Independent & 0.000\\
    \cline{2-3}
    & Random & 0.382$^*$\\
    \cline{2-3}
    & Cluster & 0.448$^*$\\
    \hline
    \multirow{3}{*}{0.460} & Independent & 0.000\\
    \cline{2-3}
    & Random & 0.540$^*$\\
    \cline{2-3}
    & Cluster & 0.542$^*$\\
    \hline
    \end{tabular}
    \caption{Kolmogorov-D statistic from 2-sample K-S test for E-R graph, average v.s. weighted interaction rule under standard normal distribution, $p$-value $< 0.01$ ($^*$)}
    \label{tab:K-Sstandardnormal}
\end{table}

\begin{table}[H]
    \centering
    \resizebox{\columnwidth}{!}{%
    \begin{tabular}{|c|c|c|c|c|}
    \hline
    Distribution & Interaction Rule & Independent & Random & Cluster \\
    \hline
    \multirow{2}{*}{Beta(2,2)} & Average & 0.037 & 0.043 & 0.083 \\
    \cline{2-5}
     & Weighted & 0.037 & 0.041 & 0.072 \\
    \hline
    \multirow{2}{*}{Beta(2,5)} & Average & 0.027 & 0.031 & 0.058 \\
    \cline{2-5}
     & Weighted & 0.027 & 0.030 & 0.051 \\
    \hline
    \multirow{2}{*}{Normal(0,1)} & Average & 0.180 & 0.196 & 0.307 \\
    \cline{2-5}
     & Weighted & 0.180 & 0.195 & 0.311 \\
    \hline
    \end{tabular}}
    \caption{Mean $W_1$ distance for S-F graphs across various scenarios}
    \label{tab:SFgraph}
\end{table}

\begin{table}[H]
    \centering
    \begin{tabular}{|c|c|c|}
    \hline
    \makecell{Initial \\ Distribution} & Strategies & Statistic \\
    \hline
    \multirow{3}{*}{Beta(2,2)} & Independent & 0.000\\
    \cline{2-3}
    & Random & 0.076\\
    \cline{2-3}
    & Cluster & 0.308$^{*}$\\
    \hline
    \multirow{3}{*}{Beta(2,5)} & Independent & 0.000\\
    \cline{2-3}
    & Random & 0.088$^\Delta$\\
    \cline{2-3}
    & Cluster & 0.252$^{*}$\\
    \hline
    \multirow{3}{*}{Normal(0,1)} & Independent & 0.000\\
    \cline{2-3}
    & Random & 0.044\\
    \cline{2-3}
    & Cluster & 0.044\\
    \hline
    \end{tabular}
    \caption{Kolmogorov-D statistic from 2-sample K-S test for S-F graph, average v.s. weighted interaction rule, $p$-value $< 0.05$ ($^\Delta$), $< 0.01$ ($^{*}$)}
    \label{tab:K-SSF}
\end{table}

\begin{figure}[H]
     \centering
     \begin{subfigure}[b]{0.15\textwidth}
         \centering
         \includegraphics[width=\textwidth]{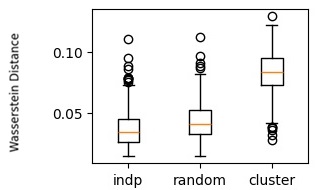}
         \caption{ }
         \label{fig:SF22}
     \end{subfigure}
     \hfill
     \begin{subfigure}[b]{0.15\textwidth}
         \centering
         \includegraphics[width=\textwidth]{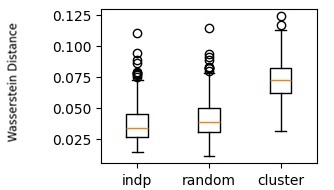}
         \caption{ }
         \label{fig:SF22weighted}
     \end{subfigure}
     \hfill
     \begin{subfigure}[b]{0.15\textwidth}
         \centering
         \includegraphics[width=\textwidth]{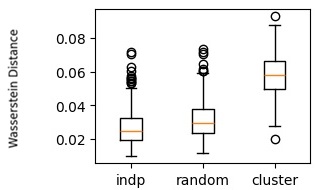}
         \caption{ }
         \label{fig:SF25}
     \end{subfigure}
     \begin{subfigure}[b]{0.15\textwidth}
         \centering
         \includegraphics[width=\textwidth]{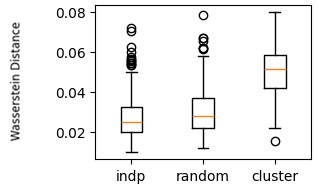}
         \caption{ }
         \label{fig:SF25weighted}
     \end{subfigure}
     \hfill
     \begin{subfigure}[b]{0.15\textwidth}
         \centering
         \includegraphics[width=\textwidth]{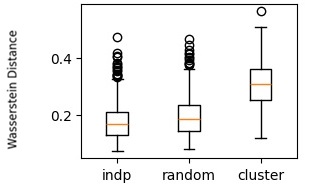}
         \caption{ }\label{fig:SFstandardnormal}
     \end{subfigure}
     \hfill
     \begin{subfigure}[b]{0.15\textwidth}
         \centering
         \includegraphics[width=\textwidth]{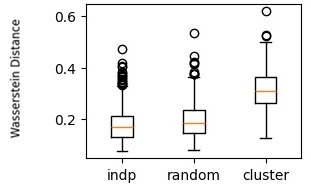}
         \caption{} \label{fig:SFstandardnormalweighted}
     \end{subfigure}
     \caption{S-F model}        \label{fig:overallSF}
\end{figure}

\begin{table}[H]
    \centering
    \resizebox{\columnwidth}{!}{%
    \begin{tabular}{|c|c|c|c|c|c|}
    \hline
    Distribution & \makecell{Interaction \\ Rule} & Independent & Random & Cluster & \makecell{Independent \\ (30)}\\
    \hline
    \multirow{2}{*}{Beta(2,2)} & Average & 0.013 & 0.083 & 0.143  & 0.048 \\
    \cline{2-6}
     & Weighted & 0.013 & 0.073 & 0.136  & 0.048 \\
    \hline
    \multirow{2}{*}{Beta(2,5)} & Average & 0.010 & 0.056 & 0.099  & 0.035 \\
    \cline{2-6}
     & Weighted & 0.010 & 0.050 & 0.094  & 0.035 \\
    \hline
    \multirow{2}{*}{Normal(0,1)} & Average & 0.063 & 0.304 & 0.574 & 0.230 \\
    \cline{2-6}
     & Weighted & 0.063 & 0.302 & 0.571  & 0.230 \\
    \hline
    \end{tabular}}
    \caption{Mean $W_1$ distance for undirected ego-Facebook dataset}
    \label{tab:fbundirected}
\end{table}

\begin{table}[H]
    \centering
    \begin{tabular}{|c|c|c|}
    \hline
    Probability & Strategies & Statistic \\
    \hline
    \multirow{3}{*}{Beta(2,2)} & Independent & 0.000\\
    \cline{2-3}
    & Random & 0.530$^*$\\
    \cline{2-3}
    & Cluster & 0.262$^*$\\
    \hline
    \multirow{3}{*}{Beta(2,5)} & Independent & 0.000\\
    \cline{2-3}
    & Random & 0.508$^*$\\
    \cline{2-3}
    & Cluster & 0.242$^*$\\
    \hline
    \multirow{3}{*}{Normal(0,1)} & Independent & 0.000\\
    \cline{2-3}
    & Random & 0.058\\
    \cline{2-3}
    & Cluster & 0.086$^\Delta$\\
    \hline
    \end{tabular}
    \caption{Kolmogorov-D statistic from 2-sample K-S test for undirected ego-Facebook dataset, average v.s. weighted interaction rule, $p$-value $< 0.05$ ($^\Delta$), $< 0.01$ ($^*$)}
    \label{tab:K-Sfacebook}
\end{table}

\begin{table}[H]
    \centering
    \resizebox{\columnwidth}{!}{%
    \begin{tabular}{|c|c|c|c|c|c|}
    \hline
    Distribution & \makecell{Interaction \\ Rule} & Independent & Random & Cluster & \makecell{Independent \\ (30)}\\
    \hline
    \multirow{2}{*}{Beta(2,2)} & Average & 0.029 & 0.084 & 0.126  & 0.048 \\
    \cline{2-6}
     & Weighted & 0.029 & 0.068 & 0.117 & 0.048 \\
    \hline
    \multirow{2}{*}{Beta(2,5)} & Average & 0.021 & 0.058 & 0.087 & 0.035 \\
    \cline{2-6}
     & Weighted & 0.021 & 0.047 & 0.081 & 0.035 \\
    \hline
    \multirow{2}{*}{Normal(0,1)} & Average & 0.144 & 0.287 & 0.496  & 0.230 \\
    \cline{2-6}
     & Weighted & 0.144 & 0.286 & 0.495 & 0.230 \\
    \hline
    \end{tabular}}
    \caption{Mean $W_1$ distance for directed EU-email dataset}
    \label{tab:emaildirected}
\end{table}

\begin{table}[H]
    \centering
    \begin{tabular}{|c|c|c|}
    \hline
    \makecell{Initial \\ Distribution} & Strategies & Statistic \\
    \hline
    \multirow{4}{*}{Beta(2,2)} & Independent & 0.000\\
    \cline{2-3}
    & Random & 0.472$^*$\\
    \cline{2-3}
    & Cluster & 0.214$^*$\\
    \cline{2-3}
    & Independent (10) & 0.000\\
    \hline
    \multirow{4}{*}{Beta(2,5)} & Independent & 0.000\\
    \cline{2-3}
    & Random & 0.402$^*$\\
    \cline{2-3}
    & Cluster & 0.180$^*$\\
    \cline{2-3}
    & Independent (10) & 0.000\\
    \hline
    \multirow{4}{*}{Normal(0,1)} & Independent & 0.000\\
    \cline{2-3}
    & Random & 0.034\\
    \cline{2-3}
    & Cluster & 0.062\\
    \cline{2-3}
    & Independent (10) & 0.000\\
    \hline
    \end{tabular}
    \caption{Kolmogorov-D statistic from 2-sample K-S test for directed EU-email dataset, $p$-value $< 0.01$ ($^*$)}
    \label{tab:K-Sdirectedemail}
\end{table}

\begin{figure}[H]
     \centering
     \begin{subfigure}[b]{0.2\textwidth}
         \centering
         \includegraphics[width=\textwidth]{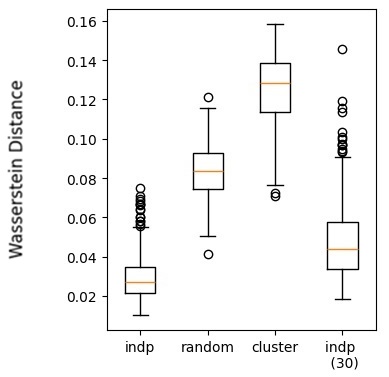}
         \caption{ }
         \label{fig:email22}
     \end{subfigure}
     \begin{subfigure}[b]{0.2\textwidth}
         \centering
         \includegraphics[width=\textwidth]{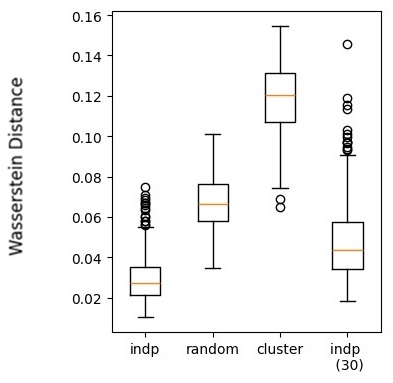}
         \caption{ }
         \label{fig:email22weighted}
     \end{subfigure}
     \begin{subfigure}[b]{0.2\textwidth}
         \centering
         \includegraphics[width=\textwidth]{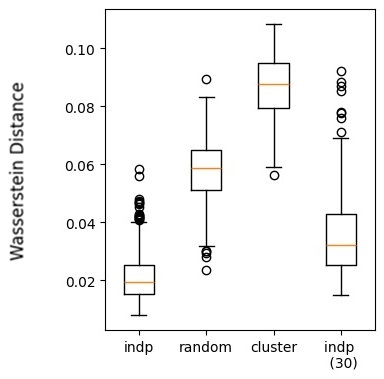}
         \caption{ }
         \label{fig:email25}
     \end{subfigure}
     \begin{subfigure}[b]{0.2\textwidth}
         \centering
         \includegraphics[width=\textwidth]{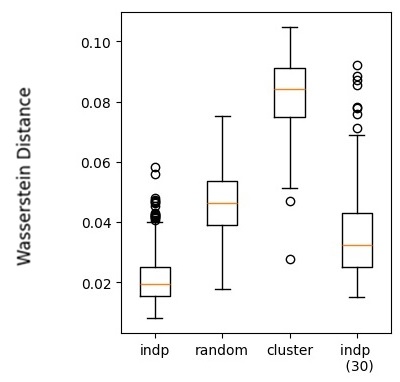}
         \caption{ }
         \label{fig:email25weighted}
     \end{subfigure}
     \begin{subfigure}[b]{0.207\textwidth}
         \centering
         \includegraphics[width=\textwidth]{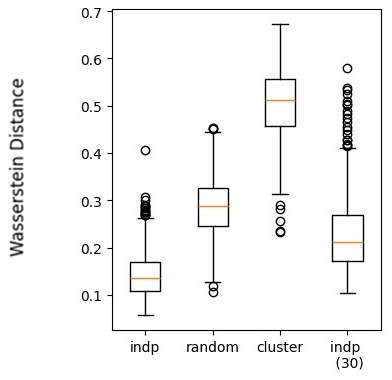}
         \caption{ }\label{fig:emailstandardnormal}
     \end{subfigure}
     \begin{subfigure}[b]{0.2\textwidth}
         \centering
         \includegraphics[width=\textwidth]{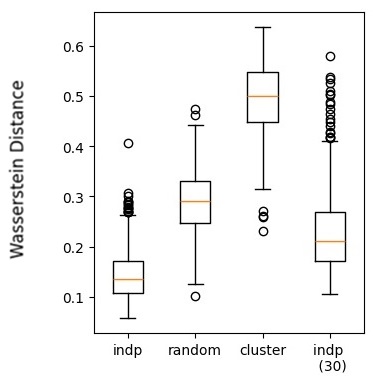}
         \caption{} \label{fig:emailstandardnormalweighted}
     \end{subfigure}
     \caption{Using \code{ego-Eu-core} dataset from SNAP}        \label{fig:overallemail}
\end{figure}

\begin{table}[H]
    \centering
    \resizebox{\columnwidth}{!}{%
    \begin{tabular}{|c|c|c|c|c|c|}
    \hline
    Distribution & \makecell{Interaction \\ Rule} & Independent & Random & Cluster  & \makecell{Independent \\ (30)}\\
    \hline
    \multirow{2}{*}{Beta(2,2)} & Average & 0.027 & 0.069 & 0.122 & 0.048 \\
    \cline{2-6}
     & Weighted & 0.027 & 0.065 & 0.116  & 0.048 \\
    \hline
    \multirow{2}{*}{Beta(2,5)} & Average & 0.019 & 0.047 & 0.084 & 0.035 \\
    \cline{2-6}
     & Weighted & 0.019 & 0.044 & 0.080  & 0.035 \\
    \hline
    \multirow{2}{*}{Normal(0,1)} & Average & 0.132 & 0.276 & 0.485 & 0.230 \\
    \cline{2-6}
     & Weighted & 0.132 & 0.275 & 0.481  & 0.230 \\
    \hline
    \end{tabular}}
    \caption{Mean $W_1$ distance for undirected EU-email dataset}
    \label{tab:emailundirected}
\end{table}

\begin{table}[H]
    \centering
    \begin{tabular}{|c|c|c|}
    \hline
    \makecell{Initial \\ Distribution} & Strategies & Statistic \\
    \hline
    \multirow{4}{*}{Beta(2,2)} & Independent & 0.000\\
    \cline{2-3}
    & Random & 0.144$^*$\\
    \cline{2-3}
    & Cluster & 0.152$^*$\\
    \cline{2-3}
    & Independent (10) & 0.000\\
    \hline
    \multirow{4}{*}{Beta(2,5)} & Independent & 0.000\\
    \cline{2-3}
    & Random & 0.134$^*$\\
    \cline{2-3}
    & Cluster & 0.172$^*$\\
    \cline{2-3}
    & Independent (10) & 0.000\\
    \hline
    \multirow{4}{*}{Normal(0,1)} & Independent & 0.000\\
    \cline{2-3}
    & Random & 0.040\\
    \cline{2-3}
    & Cluster & 0.070\\
    \cline{2-3}
    & Independent (10) & 0.000\\
    \hline
    \end{tabular}
    \caption{Kolmogorov-D statistic from 2-sample K-S test for undirected EU-email dataset, average v.s. weighted interaction rule, $p$-value $< 0.01$ ($^*$)}
    \label{tab:K-Sundirectedemail}
\end{table}

\begin{table}[H]
    \centering
    \resizebox{\columnwidth}{!}{%
    \begin{tabular}{|c|c|c|c|c|c|}
    \hline
    Distribution & \makecell{Interaction \\ Rule} & Independent & Random & Cluster & \makecell{Independent \\ (30)}\\
    \hline
    \multirow{2}{*}{Beta(2,2)} & Average & 0.122$^*$ & 0.466$^*$ & 0.180$^*$  & 0 \\
    \cline{2-6}
     & Weighted & 0.122$^*$ & 0.128$^*$ & 0.096$^\Delta$  & 0 \\
    \hline
    \multirow{2}{*}{Beta(2,5)} & Average & 0.148$^*$ & 0.398$^*$ & 0.170$^*$  & 0 \\
    \cline{2-6}
     & Weighted & 0.148$^*$ & 0.116$^*$ & 0.110$^*$  & 0 \\
    \hline
    \multirow{2}{*}{Normal(0,1)} & Average & 0.134$^*$ & 0.126$^*$ & 0.108$^*$  & 0 \\
    \cline{2-6}
     & Weighted & 0.134$^*$ & 0.100$^\Delta$ & 0.142$^*$  & 0 \\
    \hline
    \end{tabular}}
    \caption{Kolmogorov-D statistic from 2-sample K-S test for directed against undirected EU-email dataset, $p$-value $< 0.05$ ($^\Delta$), $< 0.01$ ($^*$)}
    \label{tab:kstestemail}
\end{table}

%








\end{document}